\documentclass[review,onefignum,onetabnum]{siamart171218}
\usepackage{geometry}
\usepackage{tikz} 
\usepackage{tikz-cd}
\usetikzlibrary{decorations.shapes}
\usetikzlibrary{shadows, positioning, calc}
\tikzset{decorate sep/.style 2 args={decorate,decoration={shape backgrounds,shape=circle,shape size=#1,shape sep=#2}}}

\geometry{a4paper, paperwidth=8in, paperheight=11in, margin=1.4in}



\usepackage{lipsum}
\usepackage{amsfonts}
\usepackage{graphicx}
\usepackage{epstopdf}
\usepackage{algorithmic}
\usepackage{amsmath}
\usepackage{mathrsfs}
\usepackage{caption}
\usepackage{subcaption}
\captionsetup{compatibility=false}

\ifpdf
  \DeclareGraphicsExtensions{.eps,.pdf,.png,.jpg}
\else
  \DeclareGraphicsExtensions{.eps}
\fi


\newsiamremark{remark}{Remark}
\newsiamremark{hypothesis}{Hypothesis}
\crefname{hypothesis}{Hypothesis}{Hypotheses}
\newsiamthm{claim}{Claim}

\headers{Numerical analysis of Exchange Option}{Kevin S. Zhang, Traian A. Pirvu}

\title{Numerical Simulation of Exchange Option with Finite Liquidity: Controlled Variate Model\thanks{Preprint.
\funding{This work was funded by NSERC grant 5-36700.}}}

\author{
Kevin S. Zhang\thanks{Department of Mathematics and Statistics, McMaster University, Hamilton, ON, Canada.}
\and 
Traian A. Pirvu\footnotemark[3]\thanks{Department of Mathematics and Statistics, McMaster University, Hamilton, ON, Canada.}
}

\usepackage{amsopn}
\usepackage{bbm}

\makeatletter
\newcommand*{\addFileDependency}[1]{
  \typeout{(#1)}
  \@addtofilelist{#1}
  \IfFileExists{#1}{}{\typeout{No file #1.}}
}
\makeatother


\ifpdf
\hypersetup{
  pdftitle={Numerical Simulation of Exchange Option with finite liquidity: control variate model},
  pdfauthor={Kevin Zhang, Traian A. Pirvu}
}
\fi




\begin{document}

\maketitle

\begin{abstract}
  In this paper we develop numerical pricing methodologies for European style Exchange Options written on a pair of correlated assets, in a market with finite liquidity. In contrast to the standard multi-asset Black-Scholes framework, trading in our market model has a direct impact on the asset's price. The price impact is incorporated into the dynamics of the first asset through a specific trading strategy, as in large trader liquidity model. Two-dimensional Milstein scheme is implemented to simulate the pair of assets prices. The option value is numerically estimated by Monte Carlo with the Margrabe option as controlled variate. Time complexity of these numerical schemes are included. Finally, we provide a deep learning framework to implement this model effectively in a production environment.
\end{abstract}

\begin{keywords}
  Exchange Option, FX, price impact, XVA, illiquid market, Monte Carlo, deep learning
\end{keywords}

\begin{AMS}
91G20, 68T99, 65C30, 65C05
\end{AMS}

\section{Introduction}
The Black-Scholes (BS) model was truly a breakthrough for pricing single asset options. It assumes participants operate in a perfectly liquid, friction-less and complete market. In practice, one or more of these assumptions are violated. When the liquidity restriction is relaxed, trading will impact the price of the underlying assets. Wilmott (2000) \cite{Wilmott} was one of the pioneers of these price impact models. He considered price impacts depending upon different trading strategies such as buy and hold, limit order and portfolio optimization. To account for price impact, Liu and Yong (2005) \cite{LiunYong} included an additional term in the asset price stochastic differential equation (SDE). This inclusion indirectly adds a valuation adjustment to the price of the option. Such an adjustment stems from a lack of liquidity, and may be classified as \textit{liquidity valuation adjustment} (LVA). Various non-linear BS-like partial differential equations (PDE), capturing the resulting price impact from trading have been studied \cite{Glover, Dyshaev, Ahmadian, Arenas}. All these models share the similarity of being single-asset LVA models.
\par
\textit{Exchange Options} provide the utility of exchanging one asset for another. Under the BS assumption for binary asset markets, Margrabe (1978) \cite{Margrabe} derived a closed form solution for the price of Exchange Options. The Exchange Option plays an essential role in currency markets. The \textit{Foreign Exchange} (FX) \textit{Option} is an Exchange Option where the assets are currencies. A common concern is raised when one considers the interaction between liquid and illiquid currencies. A trader might ask, ``How reliable is the price of a 3-month European style USD/UAH (Ukrainian Hryvnia, an infrequently traded currency) FX Option?". In this work, we are interested in these type of scenarios. Recent studies on Exchange Options, such as \cite{Alos1,Alos2,Siraj,Hainaut}, exhibit deviation from the assumptions of BS. The aforementioned studies predominately involve stochastic volatility models. Similar to Exchange Options, studies on Spread Option pricing have been conducted in the presence of full or partial price impact \cite{Pirvu, Pirvu2}.
\par
In this paper, we consider a binary-asset market with a single illiquid asset. Under this consideration, we construct a price impact model, called the \textit{finite liquidity market model} (FLMM). The model is a system of SDEs, one for each asset. The liquid asset is unchanged, the illiquid is modified to incorporate the resulting price impact from trading. Existence and uniqueness conditions on the SDES are established for the FLMM (see \autoref{appendix}). By replicating a portfolio, We derive the partial differential equation (PDE) characterization of option prices. Further, we consider a market consisting of market makers, who trade by Delta Hedging. 
We utilize the Milstein method and simulate the FLMM SDEs as inspired by \cite{Giles2,Desmond}. The Margrabe Exchange Option is used as the control variate for our Monte Carlo (MC) pricing of the option. Motivated by \cite{Ferguson,Culkin}, we apply deep feed-forward network to our MC pricing engine and achieves accurate high speed pricing.
\par
The remainder of the content written in this paper is organized in the following sections. Section \ref{sec:Model} discusses the model framework. In Section \ref{sec:Delta}, we analyze the price impact effect when majority of the market participants implement Delta Hedging. In Section \ref{sec:Num}, we apply Milstein's method to simulate the path-wise price and sensitivity. Subsequently, we deploy control variate MC for estimation. Section \ref{sec:deep} contains the methodology of \textit{Deeply Learning Derivative} for Exchange Option with price impact. In Section \ref{sec:conclusions}, we make some concluding statements for the readers. The last Section is an Appendix containing the proofs of our results.

\section{Model Framework}
\label{sec:Model}
In this section we describe the dynamics of FLMM. There is a filtered probability space $\big(\Omega,\mathbb{P},\mathscr{F}(t)\big)$ that satisfies the usual conditions. There are two risky assets whose prices are assumed to be a two-dimensional correlated I\^to process $\mathbf{S}(t)=\big(S_1(t),S_2(t)\big)$. There is also a risk-free asset $D(t).$ The uncertainty in this model is driven by a two-dimensional independent Brownian Motion $\mathbf{W}(t)=\big(W_1(t),W_2(t)\big)$. The system of SDEs which captures the asset price dynamics can be illustrated as follows:
\begin{align}
\label{finiteog}
&\frac{dS_1(t)}{S_1(t)}=\mu_1(t)dt+\sigma_{1}dW_1(t)+\lambda\big(t,S_1(t),S_2(t)\big)df\big(t,S_1(t),S_2(t)\big),\nonumber
\\
&\frac{dS_2(t)}{S_2(t)}=\mu_2(t)dt+\sigma_{2}\rho{}dW_1(t)+\sigma_{2}\sqrt{1-\rho^2}dW_2(t),
\\
&\frac{dD(t)}{D(t)}=-rdt,\nonumber
\end{align}
where $\mu_i(t)$, $\sigma_{i}$, $\rho$ are the drift process, volatility and correlation of each I\^to Process respectively. The novelty here is the term $\lambda\big(t,S_1(t),S_2(t)\big)df\big(t,S_1(t),S_2(t)\big),$ and it represents the price impact $\lambda(t,s_1,s_2)$ from a trading strategy $f(t,s_1,s_2).$ We will assumed the price impact is always non-negative, that is $\lambda(t,s_1,s_2)\geq{0}.$ Let us point out the two-dimensional market model used by Margrabe (1978) \cite{Margrabe} is a special case of this model when $\lambda(t, s_1,s_2)=0$. 
\par
We plan to obtain a canonical SDE of Asset $1,$ and this will allow for a better understanding of the model's dynamics. In order to achieve this, we first apply It\^o's Theorem to compute the following differential $df\big(t,S_1(t),S_2(t)\big).$ Then, we isolate the $dS_1(t)$ terms, and compute the following quadratic/cross-variation terms: $dS_1(t) dS_1(t)$, $dS_1(t) dS_2(t)$ and $dS_2(t) dS_2(t)$. By doing so we arrive at:

    \begin{align}\label{FLMMSDE}
    dS_1(t)&=\bar{\mu}_1\big(\mathbf{S}(t)\big)dt+\bar{\sigma}_{11}\big(\mathbf{S}(t)\big)dW_1(t)+\bar{\sigma}_{12}\big(\mathbf{S}(t)\big)dW_2(t),
    \end{align}
    where the drift and diffusion functions are:
    \begin{align*}
    &\bar{\mu}_1(t,s_1,s_2)=\frac{1}{1-\lambda{}f_{s_1}}\Big(\mu_1s_1+\lambda{}f_{t}+s_2\mu_2\lambda{}f_{s_2}+\frac{f_{s_1s_2}(\rho\sigma_1\sigma_2s_1s_2+\sigma_2^2s_2^2\lambda{}f_{s_2})}{1-\lambda{}f_{s_1}}
    \\
    &\qquad\qquad\quad+\frac{f_{s_1s_1}(\sigma_1^2s_1^2+\sigma^2_2s_2^2\lambda^2f_{s_2}^2+2\rho\sigma_1\sigma_2s_1s_2\lambda{}f_{s_2})}{2\big(1-\lambda{}f_{s_1}\big)^2}+\frac{\sigma_2^2s_2^2f_{s_2s_2}}{2}\Big),
    \\
    &\bar{\sigma}_{11}(t,s_1,s_2)=\frac{\sigma_1s_1}{1-\lambda{}f_{s_1}},\qquad\bar{\sigma}_{12}(t,s_1,s_2)=\frac{\sigma_2s_2\lambda{}f_{s_2}}{1-\lambda{}f_{s_1}}.
    \end{align*}
\par
With the model dynamics in hand, we can determine the requirements for the SDE driving $S_1$ to have a unique solution. In classical literature on SDE such as Oksendal (1992) \cite{Oksendal}, there are classical theorems for the existence and uniqueness of different kinds (strong, weak) solutions. The following theorem provides sufficient conditions for the existence and uniqueness of FLMM SDEs:

\begin{theorem}[\textbf{Finite Liquidity Existence and Uniqueness Theorem I}]
\label{theorem:FLMMexist}
Under the assumptions $(1)$ to $(6)$ of (\ref{FLEUT1}), the SDE of
$S_1$ in \eqref{FLMMSDE} has a unique strong solution.
\end{theorem}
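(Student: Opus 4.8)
The plan is to reduce the claim to the classical existence-and-uniqueness theorem for It\^o SDEs (see Oksendal \cite{Oksendal}), which guarantees a unique strong solution once the coefficients satisfy a global Lipschitz condition and a linear growth bound. Writing the $S_1$ dynamics \eqref{FLMMSDE} with drift $\bar\mu_1$ and diffusion pair $(\bar\sigma_{11},\bar\sigma_{12})$, I would regard these as functions of $(t,s_1,s_2)$ and, abbreviating $s=(s_1,s_2)$, verify that under assumptions $(1)$ to $(6)$ there exist constants $C,D>0$ with
\begin{align*}
|\bar\mu_1(t,s)| + |\bar\sigma_{11}(t,s)| + |\bar\sigma_{12}(t,s)| &\le C\big(1+|s_1|+|s_2|\big),\\
|\bar\mu_1(t,s)-\bar\mu_1(t,\tilde s)| + \textstyle\sum_{j=1,2}|\bar\sigma_{1j}(t,s)-\bar\sigma_{1j}(t,\tilde s)| &\le D\,|s-\tilde s|.
\end{align*}
Since the coefficients also depend on $s_2$, I would either apply the vector-valued version of the theorem to the pair $(S_1,S_2)$ --- noting that the $S_2$ equation in \eqref{finiteog} is a standard correlated geometric It\^o process whose coefficients are trivially Lipschitz with linear growth --- or substitute the explicit $S_2$ path and treat \eqref{FLMMSDE} as a one-dimensional SDE with adapted coefficients.

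I would first dispatch the two diffusion coefficients, which are the simpler objects. Both $\bar\sigma_{11}=\sigma_1 s_1/(1-\lambda f_{s_1})$ and $\bar\sigma_{12}=\sigma_2 s_2\lambda f_{s_2}/(1-\lambda f_{s_1})$ are built from the linear factors $s_1,s_2$, the bounded quantities $\lambda, f_{s_1}, f_{s_2}$, and the reciprocal of $1-\lambda f_{s_1}$. The crucial input is the assumption that keeps the denominator uniformly bounded away from zero, say $1-\lambda f_{s_1}\ge\epsilon>0$; this makes $g\mapsto 1/g$ Lipschitz on the relevant range. Using the standard algebra of Lipschitz functions --- products of bounded Lipschitz functions are Lipschitz, and $1/g$ is Lipschitz whenever $g$ is Lipschitz and bounded away from zero --- together with the assumed Lipschitz bounds on $\lambda,f_{s_1},f_{s_2}$, both diffusion coefficients inherit the Lipschitz and linear-growth estimates.

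The bulk of the work, and the step I expect to be the main obstacle, is the drift $\bar\mu_1$. It is a rational expression in which $1-\lambda f_{s_1}$ appears to the first and second power and which couples the first derivatives $f_t,f_{s_1},f_{s_2}$ with the second derivatives $f_{s_1s_1},f_{s_1s_2},f_{s_2s_2}$ and the quadratic monomials $s_1^2, s_2^2, s_1 s_2$. My plan is to split $\bar\mu_1$ into its summands and estimate each in turn, repeatedly invoking the lower bound on $1-\lambda f_{s_1}$ to control the $(1-\lambda f_{s_1})^{-1}$ and $(1-\lambda f_{s_1})^{-2}$ prefactors, and invoking the assumed boundedness and Lipschitz continuity of $\lambda, f$ and of its first and second partials to control the remaining factors. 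The delicate points are \emph{(i)} that the quadratic monomials $s_1^2,s_2^2,s_1s_2$ must be paired with the bounded second-derivative terms so that each product is still globally Lipschitz --- which forces the relevant second-derivative hypotheses to be genuinely uniform, not merely local --- and \emph{(ii)} that the squared denominator amplifies any variation of $\lambda f_{s_1}$, so the Lipschitz estimate for $1/(1-\lambda f_{s_1})^2$ must be tracked carefully. Once each summand is shown to be Lipschitz with linear growth, closure of these properties under finite sums and products yields the two master estimates for $\bar\mu_1$.

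With the linear-growth and global-Lipschitz bounds established for $\bar\mu_1,\bar\sigma_{11},\bar\sigma_{12}$ (and trivially for the $S_2$ coefficients), the classical theorem of \cite{Oksendal} applies and delivers a unique strong solution of the $S_1$ SDE in \eqref{FLMMSDE}, as claimed. The entire argument hinges on assumptions $(1)$ to $(6)$ being precisely strong enough to supply the uniform lower bound on $1-\lambda f_{s_1}$ and the uniform boundedness and Lipschitz regularity of $\lambda,f$ and its derivatives; checking that these six hypotheses indeed imply the two master estimates is the substance of the proof.
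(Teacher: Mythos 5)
Your proposal follows essentially the same route as the paper: both reduce the claim to the classical It\^o/Oksendal existence--uniqueness theorem by showing $\bar\mu_1$, $\bar\sigma_{11}$, $\bar\sigma_{12}$ are uniformly Lipschitz, with condition $(3)$ keeping the denominator $1-\lambda f_{s_1}$ away from zero and the other conditions controlling the numerator terms. The only difference is mechanical: where you invoke the algebra of Lipschitz functions and defer the verification that $(1)$--$(6)$ imply your two master estimates, the paper carries out exactly that verification by explicitly computing the partial derivatives of the coefficients and observing that conditions $(1)$--$(6)$ are precisely the boundedness of those derivatives.
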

\begin{proof}
Please refer to the Appendix Section \ref{FLEUT1}.
\end{proof}



\par
The replicating portfolio argument is fundamental to the derivations of BS equation. The replication argument in Chapter 4.5 of Shreve (2004) \cite{ShreveII} can be modified to replicate the option within FLMM framework. The portfolio used for replication will have two assets and one cash account. The resulting equation will be a linear BS-like PDE of the parabolic family:
    \[\left\{
      \begin{aligned}
      \label{FLMMPDE}
        rV&=V_t+rs_1V_{s_1}+rs_2V_{s_2}+\frac{V_{s_1s_2}}{1-\lambda{}f_{s_1}}\big(\rho\sigma_1\sigma_2s_1s_2+\lambda{}f_{s_2}\sigma_2^2s_2^2\big)   \cr
        &+\frac{V_{s_1s_1}}{2(1-\lambda{}f_{s_1})^2}\big(\sigma_1^2s_1^2+\lambda^2f_{s_2}^2\sigma^2_2s_2^2+2\lambda{}f_{s_2}\rho\sigma_1\sigma_2s_1s_2\big)+\frac{1}{2}V_{s_2s_2}\sigma_2^2s_2^2,   \cr
        V(T,s_1,s_2)&=h(s_1,s_2),\quad\text{with $0<s_1,s_2<\infty$, $0\leq{t}\leq{T}$},
      \end{aligned}\right.
    \]
    where $h(s_1,s_2)$ is a general payoff function. Existence results in Chapter 4 of Friedman (1975) \cite{Friedman} yield a unique classical solution for this PDE, granted $1-\lambda{}f_{s_1}$ satisfies condition $(3)$ of Theorem \ref{theorem:FLMMexist}.
\par
Feynman-Kac formula allows that the solutions for this PDE to be represented as a conditional expectations. As a by product of Feynman-Kac, we will discover an induced risk-neutral measure $\widetilde{\mathbbm{P}}$. Under this measure, we have the pricing formula:
\begin{gather}\label{FLRNPF}
V\big(t,s_1,s_2\big)=\widetilde{\mathbb{E}}^{t,s_1,s_2}[e^{-r(T-t)}V\big(T,S_1(T),S_2(T)\big)].
\end{gather}

\section{Analysis of Replication of Exchange Option by Delta Hedging as Price Impact}
\label{sec:Delta}
In this section, we show that FLMM has a unique strong solution for a specific choice of price impact $\lambda\big(t,S_1(t)\big)df\big(t,S_1(t),S_2(t)\big)$. There have been numerous studies in the past focused on price impacts from trading. For example, Liu and Yong (2005) \cite{LiunYong} studied a price impact model for single asset options. Pirvu et al. (2014) \cite{Pirvu} also studied a price impact model for spread option. In this paper, we adopt the following price impact function:
    \begin{align}\label{impact}
    \bar{\lambda}\big(t,s_1\big)=
    \begin{cases}
    \epsilon\big(1-e^{-\beta(T-t)^{\frac{3}{2}}}\big)\quad&\text{if}\quad\underline{S_1}\leq{}s_1\leq\overline{S}_1,
    \\
    0\quad&\text{otherwise},
    \end{cases}
    \end{align}
where $\underline{S_1}$ and $\overline{S}_1$ represents a trading floor and cap of the asset respectively. This cause the trading price impact to be truncated within the floor and cap. As for the other parameters, $\epsilon$ is the price impact per share, and $\beta$ is a decaying constant. 
It is important to emphasize that $\bar{\lambda}(t,s_1)$ will be employed for numerical approximation. The theoretical $\lambda(t,s_1)$ should be a function with bounded derivative, that is obtained through standard mollifying $\bar{\lambda}(t,s_1)$.
\par
Delta hedging is a strategy adopted by many big financial institutions to reduce their option portfolio's exposure against movements in the underlying assets. In this paper, we assume majority of the market participants implement Delta hedging with the Delta of the impact-less Exchange Option. Therefore, we choose the trading strategy function to be $\Delta_1(t)$ of Margrabe's option, that is $f\big(t,s_1,s_2\big)=\Delta_1(t)$. The closed form expression for $\Delta_{1}$ can be found in the Appendix Section (\ref{sec:greek}).

As a result, the drift and diffusion functions in \eqref{FLMMSDE} have the following dynamics:
    \begin{align}\label{FLMMSDEexchange}
    &\widetilde{\mu}_1(t,s_1,s_2)=\frac{1}{1-\lambda\Gamma_{11}}\Big(\mu_1s_1+\lambda{}Chm_1+\mu_2s_2\lambda\Gamma_{12}
    +\frac{Spd_{112}(\rho\sigma_1\sigma_2s_1s_2+\sigma_2^2s_2^2\lambda\Gamma_{12})}{1-\lambda\Gamma_{11}}\nonumber
    \\
    &\qquad\qquad\quad+\frac{Spd_{111}(\sigma_1^2s_1^2+\sigma^2_2s_2^2\lambda^2\Gamma_{12}^2+2\rho\sigma_1\sigma_2s_1s_2\lambda\Gamma_{12})}{2(1-\lambda\Gamma_{11})^2}+\frac{\sigma_2^2s_2^2Spd_{122}}{2}\Big),\nonumber
    \\
    &\widetilde{\sigma}_{11}(t,s_1,s_2)=\frac{\sigma_1s_1}{1-\lambda\Gamma_{11}},\qquad\widetilde{\sigma}_{12}(t,s_1,s_2)=\frac{\sigma_2s_2\lambda\Gamma_{12}}{1-\lambda\Gamma_{11}}.\nonumber
    \end{align}
Here $Chm$, $\Gamma$ and $Spd$ are higher order Greeks of Magrabe's option derived from Margrabe's formula. All the Greek formulas are given in the Appendix section \ref{sec:greek}.

\begin{theorem}[\textbf{Existence and Uniqueness of Finite Liquidity Market \\Model SDE II}]
\label{theorem:ExistFLMM2}
\qquad\qquad\qquad\qquad\qquad
\par
The SDE of $S_1$ with drift and diffusion function of (\ref{sec:greek}) has a unique strong solution.
\end{theorem}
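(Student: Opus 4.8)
The plan is to obtain Theorem \ref{theorem:ExistFLMM2} as a direct consequence of Theorem \ref{theorem:FLMMexist}. The general result already delivers a unique strong solution for any admissible pair $(f,\lambda)$ meeting hypotheses $(1)$--$(6)$ listed in Appendix Section \ref{FLEUT1}, so it suffices to verify those six hypotheses for the concrete choice $f(t,s_1,s_2)=\Delta_1(t)$, the Margrabe delta, together with the mollified impact $\lambda$ coming from \eqref{impact}. The first step is bookkeeping: substituting $f=\Delta_1$ converts the generic partials of $f$ into the Margrabe Greeks of Appendix Section \ref{sec:greek}, so that $f_{s_1}=\Gamma_{11}$, $f_{s_2}=\Gamma_{12}$, $f_t=Chm$, and $f_{s_1s_1},f_{s_1s_2},f_{s_2s_2}$ become $Spd_{111},Spd_{112},Spd_{122}$. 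After this substitution every coefficient appearing in $\widetilde\mu_1,\widetilde\sigma_{11},\widetilde\sigma_{12}$ is an explicit elementary function of $(t,s_1,s_2)$ built from $N(d_1)$, the Gaussian density, rational factors of $s_1,s_2$, and the time-to-maturity $\tau=T-t$.

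The decisive hypothesis is the non-degeneracy requirement (condition $(3)$ of Theorem \ref{theorem:FLMMexist}): that the common denominator $1-\lambda\Gamma_{11}$ be bounded away from zero on $[0,T]\times[\underline{S_1},\overline{S}_1]\times(0,\infty)$. Since $\widetilde\sigma_{11},\widetilde\sigma_{12}$ and each term of $\widetilde\mu_1$ are rational in this quantity, a strictly positive lower bound is exactly what keeps the coefficients from exploding, and this is where the particular form of \eqref{impact} is engineered to help. On the truncation window the Margrabe gamma is bounded by $\Gamma_{11}\le C_\Gamma\,\tau^{-1/2}$, with $C_\Gamma$ depending on $\underline{S_1}$ and the volatilities (the Gaussian density caps the numerator and $s_1\ge\underline{S_1}>0$ caps the $1/s_1$ factor), so $\Gamma_{11}$ is singular only as $\tau\downarrow0$; meanwhile $\bar\lambda(t,s_1)=\epsilon\big(1-e^{-\beta\tau^{3/2}}\big)\le\epsilon\beta\tau^{3/2}$. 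Multiplying, the $\tau^{3/2}$ decay dominates the $\tau^{-1/2}$ blow-up and produces the uniform estimate $\lambda\Gamma_{11}\le C_\Gamma\,\epsilon\beta T$, so that for per-share impact $\epsilon$ small enough one gets $\sup\lambda\Gamma_{11}<1$ and hence $1-\lambda\Gamma_{11}\ge c>0$. Both the strictly positive trading floor $\underline{S_1}$ and the exponent $3/2$ are essential to this estimate.

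With non-degeneracy in hand, the remaining hypotheses --- linear growth and global Lipschitz continuity in the spatial variables --- reduce to routine estimates, because the same $\tau^{3/2}$ damping controls every impact-weighted Greek. The Greeks entering the coefficients are spatial derivatives of the Margrabe price of order at most three, which scale no worse than $\tau^{-1}$; taking one further spatial derivative for the Lipschitz bound costs at most another factor $\tau^{-1/2}$, and in every case the accompanying factor of $\lambda$ (or of its mollified spatial derivative, which also vanishes at order $\tau^{3/2}$) dominates, keeping the coefficients and their gradients bounded uniformly on $[0,T]$. Outside the window $\lambda\equiv0$, the impact term in \eqref{finiteog} disappears and $S_1$ is an ordinary geometric Brownian motion with manifestly Lipschitz, linear-growth coefficients; the mollification of \eqref{impact} is invoked only to smooth the jump at the edges $\underline{S_1},\overline{S}_1$, so that $\lambda_{s_1}$ stays bounded and the two regimes glue together.

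The step I expect to be the main obstacle is precisely this uniform control as $\tau\downarrow0$: the positive lower bound on $1-\lambda\Gamma_{11}$ and the uniform-in-$\tau$ Lipschitz estimates are the only place where three competing effects --- the Gaussian Greek singularities at maturity, the calibrated $\tau^{3/2}$ decay of the impact, and the strictly positive trading floor --- must all be tracked at once. Once these bounds are assembled, the hypotheses of Theorem \ref{theorem:FLMMexist} are met and the unique strong solution follows immediately.
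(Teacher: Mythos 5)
Your overall strategy coincides with the paper's: the theorem is obtained by verifying conditions $(1)$--$(6)$ of Appendix Section \ref{FLEUT1} for $f=\Delta_1$ and the truncated impact \eqref{impact}, and your treatment of the non-degeneracy condition $(3)$ --- the competition between the $\tau^{3/2}$ decay of $\bar{\lambda}$ and the $\tau^{-1/2}$ blow-up of $\Gamma_{11}$ --- is exactly the paper's argument (the paper states it qualitatively and defers to Pirvu et al.; your quantitative version, with the bound $\lambda\Gamma_{11}\le C_\Gamma\,\epsilon\beta\tau$ and a smallness requirement on $\epsilon$, is if anything more careful).

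There is, however, a genuine gap in how you dispose of the remaining hypotheses. You claim that ``in every case the accompanying factor of $\lambda$ (or of its mollified spatial derivative) dominates,'' keeping everything bounded uniformly on $[0,T]$. This is false for condition $(6)$: that condition reads $\|s_1f_{s_1s_1}+s_1f_{s_1s_2}+\cdots+s_2^2f_{s_2s_2s_2}\|<\infty$ and carries no factor of $\lambda$ or $\lambda_{s_i}$ at all --- it collects precisely those terms of $\big[\bar{\mu}_1s_1\big]_{s_1}$ and $\big[\bar{\mu}_1s_1\big]_{s_2}$ whose numerators contain no impact factor. With $f=\Delta_1$ these are Speed and Acceleration Greeks weighted by $s_1,s_2,s_1^2,s_1s_2,s_2^2$; your $\tau^{3/2}$-damping mechanism cannot touch them, and their spatial suprema in fact blow up as $\tau\downarrow0$, so the uniform-in-$[0,T]$ boundedness you assert does not hold (this is why the paper states conditions $(1)$, $(2)$, $(4)$--$(6)$ in the space-only norm $\|\cdot\|$ and reserves $|||\cdot|||$ for condition $(3)$). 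What actually makes condition $(6)$ --- and the spatial part of all the other conditions --- work is the argument that forms the bulk of the paper's proof and that you wave off as routine: every such term contains a factor of the form $N'(d_\pm)/s_i^{n}$, and since $d_\pm$ is affine in $\log(s_1/s_2)$, the density $N'(d_\pm)$ decays like $\exp\big\{-\big(\log^2 s_i+o(\log s_i)\big)/(\sigma^2\tau)\big\}$, i.e.\ faster than any power of $s_i$ grows, as $s_i\to0$ or $s_i\to\infty$; hence each weighted Greek is bounded on $\big(\mathbb{R}^+\big)^2$ for fixed $t<T$. Note also that your truncation window only constrains $s_1$: the variable $s_2$ ranges over all of $(0,\infty)$ in every condition, so this Gaussian-dominance argument in $s_2$ cannot be avoided. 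Supplying it, and restating your boundedness claims for fixed $t$ rather than uniformly in $t$, is what is needed to close the gap.
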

\begin{proof}
Please refer to Appendix \ref{FLEUT1} for the proof.
\end{proof}

\section{Numerical Simulations}
\label{sec:Num}
\par
In this section, our first objective is to simulate the FLMM assets by applying the Milstein Algorithm. Once we have the asset processes, we can use the results in our control variate MC estimator to price the Exchange Option with price impact. As a naming convention for our analysis, we refer to the number of points $M$ used to generate the stochastic assets as ``path dimension". The amount of asset paths $N$ used in the MC estimator will be referred to as ``space dimension".

\subsection{Milstein Scheme for Asset Price}
\par
Compared with the more well known Euler-Maruyama, Milstein is a second-order pathwise method for approximating SDE solutions. It was created by Mil’shtein G. N. (1975) \cite{Milstein}, this method retains the second order terms from I\^to Taylor expansion. For a 2-dimensional SDE system satisfied by the process $\mathbf{X}(t)=\big(X_1(t),X_2(t)\big),$ a second-order approximation of the solution is:
    \begin{align}
    \label{itoexpansion2}
    X_1(t)&\approx{}X_1(t_0)+\int_{t_0}^t\mu_1\big(\mathbf{X}(u)\big)du+\int_{t_0}^t\sigma_{11}\big(\mathbf{X}(u)\big)dW_1(u)+\int_{t_0}^t\sigma_{12}\big(\mathbf{X}(u)\big)dW_2(u)\nonumber
    \\
    &+\frac{1}{2}\sum_{j,k,l=1}^2\frac{\partial\sigma_{1j}}{\partial{}x_l}\sigma_{lk}\big(\mathbf{X}(t_0)\big)\big(\Delta{W}_j(t)\Delta{W}_k(t)+\rho_{jk}(t-t_0)-\mathcal{A}_{jk}(t_0,t)\big),\nonumber
    \\
    X_2(t)&\approx{}X_2(t_0)+\int_{t_0}^t\mu_2\big(\mathbf{X}(u)\big)du+\int_{t_0}^t\sigma_{21}\big(\mathbf{X}(u)\big)dW_1(u)+\int_{t_0}^t\sigma_{22}\big(\mathbf{X}(u)\big)dW_2(u)\nonumber
    \\
    &+\frac{1}{2}\sum_{j,k,l=1}^2\frac{\partial\sigma_{2j}}{\partial{}x_l}\sigma_{lk}\big(\mathbf{X}(t_0)\big)\big(\Delta{W}_j(t)\Delta{W}_k(t)+\rho_{jk}(t-t_0)-\mathcal{A}_{jk}(t_0,t)\big),\nonumber
    \end{align}
    According to Giles (2018) \cite{Giles2}, the term $A_{ij}(t_0,t)$ is the L\'evy Area between two the two driving Brownian motions. It's behavior is captured by following stochastic integral:
    \begin{align}\label{Larea}
    \mathcal{A}_{ij}(t_0,t)=\int^{t}_{t_0}\big(\Delta{W}_i(u)dW_j(u)-\Delta{W}_j(u)dW_i(u)\big).
    \end{align}
\par
Since we are only interested in pricing and hedging, it is advantageous to work under the risk-neutral measure. FLMM in \eqref{FLMMSDE} with the updated drift and diffusion functions of \eqref{FLMMSDEexchange} has the following dynamics under $\widetilde{\mathbbm{P}}$:
    \begin{align}\label{rnpde}
    dS_1(t)&=rS_1(t)dt+\widetilde{\sigma}_{11}\big(\mathbf{S}(t)\big)d\widetilde{W}_1(t)+\widetilde{\sigma}_{12}\big(\mathbf{S}(t)\big)d\widetilde{W}_2(t),\nonumber
    \\
    dS_2(t)&=rS_2(t)dt+\widetilde{\sigma}_{21}(t)d\widetilde{W}_1(t)+\widetilde{\sigma}_{22}(t)d\widetilde{W}_2(t),
    \\
    \frac{dD(t)}{D(t)}&=-rdt,\nonumber
    \end{align}
for simplicity, we set:
    \begin{align*}
    &\widetilde{\sigma}_{21}(t)=\sigma_2s_2\rho{},\qquad\widetilde{\sigma}_{22}(t)=\sigma_2s_2\sqrt{1-\rho^2}.
    \end{align*}
\par
The Milstein approximation for \eqref{rnpde} can be set up by following these procedures:
    \begin{enumerate}
    \label{milstein}
    \item Partition $[t,T]$ into $M$ equivalent intervals of length $\Delta{t}=\frac{T-t}{M}$.
    \item Set the initial values as $S_1(0)=s_1$ and $S_2(0)=s_2$.
    \item Sample $\{\Delta{}W_1(j),\Delta{}W_2(j)\}_{j=1,2,...M}$, where each \\$\{\Delta{}W_1(j),\Delta{}W_2(j)\}\sim\mathcal{N}_2(\mathbf{0},\Delta{t}I_2)$.
    \item Generate L\'evy Areas $\mathcal{A}_{ij}(0,\Delta{t})$.
    \item Recursively define:
    \begin{align}
    \label{milsteinscheme}
    S_1(m+1)&=S_1(m)+rS_1(m)\Delta{t}+\sum_{i=1}^2\widetilde{\sigma}_{1i}\big(\mathbf{S}(m)\big)\Delta{}W_i(m+1)+\frac{1}{2}\sum_{i,j,k=1}^2\frac{\partial{\widetilde{\sigma}}_{1i}}{\partial{s}_k}\nonumber
    \\
    &\times\widetilde{\sigma}_{kj}\big(\mathbf{S}(m)\big)\big(\Delta{}W_i(m+1)\Delta{}W_j(m+1)-\mathbbm{1}_{(i=j)}\Delta{t}-\mathcal{A}_{ij}\big),\nonumber
    \\
    S_2(m+1)&=S_2(m)+rS_2(m)\Delta{t}+\sum_{i=1}^2\widetilde{\sigma}_{2i}\big(\mathbf{S}(m)\big)\Delta{}W_i(m+1)+\frac{1}{2}\sum_{i,j,k=1}^2\frac{\partial{\widetilde{\sigma}}_{2i}}{\partial{s}_k}\nonumber
    \\
    &\times\widetilde{\sigma}_{kj}\big(\mathbf{S}(m)\big)
    \big(\Delta{}W_i(m+1)\Delta{}W_j(m+1)-\mathbbm{1}_{(i=j)}\Delta{t}-\mathcal{A}_{ij}\big).\nonumber
    \end{align} 
    \end{enumerate}
\par
There are many techniques to approximate the L\'evy Area, one of the simplest is to generate the stochastic integral \eqref{Larea} piece by piece. In this paper, we adopted an algorithm which closely resembles the method found in Scheicher (2007) \cite{Scheicher}. According to Scheicher, this algorithm for L\'evy Area has complexity cost of $\mathcal{O}(K)$, where $K$ is the number of partition of the time interval $\Delta{t}$.

    \begin{algorithm}[H]
    \caption{L\'evy Area}\label{algo:Larea}
    \begin{algorithmic}
    \STATE{Define sub-partition length $\Delta^2{t}:=\frac{\Delta{t}}{K}$}
    \STATE{Generate $\mathbf{z}_1,\mathbf{z}_2\sim{}\mathcal{N}_K(\mathbf{0},I_K)$.}
    \STATE{Generate lower triangular matrix of $1$s $T$, set $R:=\Delta^2{t}T$}
    \STATE{Generate lower and upper diagonal matrices of $1$s $L$ and $U$.}
    \STATE{Set $\mathbf{B}_1:=R\mathbf{z}_1$ and $\mathbf{B}_2=:R\mathbf{z}_2$}
    \STATE{$A=\mathbf{b_1}^T(U-L)\mathbf{b_2}$}
    \RETURN{$A$}
    \end{algorithmic}
    \end{algorithm}
    
We may redefine a matrix recursion version of the Milstein Scheme. Consider the following evolutionary dynamic of $\mathbf{S}(t)$:
    \begin{align}
    \mathbf{S}(m+1)=\mathbf{B}(m)\mathbf{S}(m)+\frac{1}{2}\mathbf{b}(m).
    \end{align}
    The matrix $\mathbf{B}(m)$ consists of the first order approximation and the vector $\mathbf{b}(m)$ is the second order approximation. For our SDE system \eqref{FLMMSDEexchange}, $\mathbf{B}(m)$ and $\mathbf{b}(m)$ can be defined as follows:
    \begin{align*}
    &\mathbf{B}(m)=\begin{bmatrix}
    1 + r\Delta{t} + \widetilde{\sigma}_{11}\big(\mathbf{S}(m)\big)\Delta{W}_1(m+1) & \widetilde{\sigma}_{12}\big(\mathbf{S}(m)\big)\Delta{W}_2(m+1)\\
    \widetilde{\sigma}_{21}\big(\mathbf{S}(m)\big)\Delta{W}_1(m+1) & 1 + r\Delta{t} + \widetilde{\sigma}_{21}\big(\mathbf{S}(m)\big)\Delta{W}_2(m+1)
    \end{bmatrix},
    \\
    &\mathbf{b}(m)=\begin{bmatrix} 
    \mathbf{W}^T(m+1)J_1\Sigma\mathbf{W}(m+1)-tr(J_1\Sigma)-\mathbf{1}^T(J_1\Sigma\circ\mathcal{A})\mathbf{1}\\
    \mathbf{W}^T(m+1)J_2\Sigma\mathbf{W}(m+1)-tr(J_2\Sigma)-\mathbf{1}^T(J_2\Sigma\circ\mathcal{A})\mathbf{1}
    \end{bmatrix}.
    \end{align*}
Here $J_i$ is the Jacobi matrix of the $i$-th asset's diffusion functions at the $m$-th step. Matrix $\Sigma$ encapsulates diffusion functions of all assets, also at $m$-th step. They are of the form:
    \begin{gather*}
    J_i=\begin{bmatrix} 
    \frac{\partial{\widetilde{\sigma}}_{i1}}{\partial{s}_1} & \frac{\partial{\widetilde{\sigma}}_{i1}}{\partial{s}_2}\\
    \frac{\partial{\widetilde{\sigma}}_{i2}}{\partial{s}_1} & \frac{\partial{\widetilde{\sigma}}_{i2}}{\partial{s}_2}
    \end{bmatrix}
    ,\quad
    \Sigma=\begin{bmatrix} 
    \widetilde{\sigma}_{11} & \widetilde{\sigma}_{12}\\
    \widetilde{\sigma}_{21} & \widetilde{\sigma}_{22}
    \end{bmatrix}.
    \end{gather*}
$\mathbf{\mathcal{A}}$ is the matrix of L\'evy Areas at step $m$, it has the form:
    \begin{gather*}
    \mathbf{\mathcal{A}}=\begin{bmatrix} 
    0 & \mathcal{A}_{12}\\
    \mathcal{A}_{21} & 0
    \end{bmatrix},
    \end{gather*}
notice $\mathcal{A}$ is an off diagonal matrix, this is because the stochastic integral \eqref{Larea} is $0$ when $i=j$.
\par
It is mentioned in Higham (2015) \cite{Desmond} that Milstein scheme has complexity of $\mathcal{O}(M^2)$ compared to $\mathcal{O}(M)$ of Euler-Maruyama. This is important because Milstein scheme will carry a steeper computation time increase as $M$ increases. 
\subsection{Control Variate Estimator of the Option Price}
\par
The model without liquidity impact is a special case of FLMM. One would naturally assume there exists a high inherited correlation of option prices produced by the two models. It would make sense to use the Magrabe option's value as the control variate of impacted option's value. The Magrabe option can be priced by Magrabe's formula, which uses a pair of correlated GBMs. In fact, we can simultaneously generate the GBM paths while generating FLMM SDEs. We shall do this through Milstein scheme, in the algorithm below; $\mathbf{S}$ and $\mathbf{S}_{cv}$ represents FLMM and GBM asset prices respectively.

\begin{algorithm}[H]
\caption{Milstein Control Variate Path}
\label{algo:cMilstein}
\begin{algorithmic}
\STATE{Initialize Values $\mathbf{S}(t)=\mathbf{S}_{cv}(t)=\mathbf{s}$}
\STATE{Define $\Delta{t}=:\frac{T-t}{M}$}
\FOR{$m=0$ \textbf{to} $M-1$}
\STATE{$\mathbf{\Delta{}W}(m)=\big(\Delta{}w_1(m),\Delta{}w_2(m)\big)\sim\mathcal{N}_2(0,\Delta{t}I_2)$}
\STATE{Set $\mathbf{B}(m)$, $\mathbf{b}(m)$, $J_i$, $\Sigma$ and $\mathbf{\mathcal{A}}$}
\STATE{$\mathbf{S}(m+1)=\mathbf{B}(m)\mathbf{S}(m)+\frac{1}{2}\mathbf{b}(m)$}
\STATE{$\mathbf{S}_{cv}(m+1)=\mathbf{B}_{cv}(m)\mathbf{S}_{cv}(m)+\frac{1}{2}\mathbf{b}_{cv}(m)$}
\ENDFOR
\RETURN{$\mathbf{S}(M)$, $\mathbf{S}_{cv}(M)$}
\end{algorithmic}
\end{algorithm}

\par
By generating $\{\mathbf{S}^{(i)}(M)$, $\mathbf{S}^{(i)}_{cv}(M)\}_{i=1,2,...N}$, we can define the control variate MC estimator of FLMM Exchange Option as follows:
\begin{align}
\label{Estimator}
\overline{V}&=\frac{e^{-r(T-t)}}{N}\sum_{i=1}^N\Big(\big(S_1^{(i)}(M)-S_2^{(i)}(M)\big)^++c\big(S_{cv,1}^{(i)}(M)-S_{cv,2}^{(i)}(M)\big)^+\Big)
\\
&-cV_{Margrabe},\nonumber
\end{align}
here $V_{Margrabe}$ is the price of Magrabe option given by Margrabe's formula in a model without liquidity impact. The term $c$ is the optimization constant. In this case, the variance of our MC estimator is minimized when $\hat{c}=-\frac{Cov(V_{FLMM},V_{Margrabe})}{Var(V_{Margrabe})}$.

\subsection{Option Hedges}
\label{sec:hedge}
Managing the Greeks is an essential part of trading. To determine the Deltas of FLMM Exchange Option, we will adopt the \textit{adjoint method} of Giles and Glasserman (2006) \cite{Glassman}. This method first requires the Greeks to be generated pathwise, then a MC can be applied to estimate the actual value. The adjoint method is advantageous because these pathwise Greeks can be generated simultaneously with the assets. Suppose interchangeability exists between the differential operator and expectation, then the $j$-th Delta of FLMM Exchange Option is:
    \begin{align*}
    \Delta_j(t)=\frac{\partial}{\partial S_j(t)}\widetilde{\mathbb{E}}^{t,s_1,s_2}\Big[e^{-r(T-t)}V\big(\mathbf{S}(T)\big)\Big]=e^{-r(T-t)}\widetilde{\mathbb{E}}^{t,s_1,s_2}\Big[\frac{\partial}{\partial S_j(t)}V\big(\mathbf{S}(T)\big)\Big].
    \end{align*}
By relaxing certain regularity conditions outlined in Glasserman (2004) \cite{glassbook}, we may rewrite it as:
    \begin{align*}
    \frac{\partial}{\partial S_j(t)}V\big(\mathbf{S}(T)\big)=\sum^2_{i=1}\frac{\partial V}{\partial S_i(T)}\frac{\partial S_i(T)}{\partial S_j(t)}.
    \end{align*}
\par
During implementation, $\frac{\partial V}{\partial S_i(T)}$ can be approximated through algorithmic differentiation. While the $\frac{\partial S_i(T)}{\partial S_j(t)}$ term is obtained from taking the path-wise derivative of Milstein scheme \eqref{milsteinscheme}. Set $\Delta_{ij}(t)=\frac{\partial S_i(T)}{\partial S_j(t)}$, we obtain an approximating scheme for $\Delta_{ij}(m)$ as follows:
    \begin{align*}
    \Delta_{ij}(m+1)&=\Delta_{ij}(m)+r\Delta_{ij}(m)\Delta{t}+\sum_{k,l=1}^2\frac{\partial \widetilde{\sigma}_{ik}}{\partial s_l}\Delta_{lj}(m)\Delta{W}_k(m+1)
    \\
    &+\frac{1}{2}\sum_{k,l,p,q=1}^2\Delta_{qj}(m)\Big(\frac{\partial^2\widetilde{\sigma}_{ik}}{\partial s_p\partial s_q}\widetilde{\sigma}_{pj}\big(\mathbf{S}(m)\big)+\frac{\partial \widetilde{\sigma}_{ik}}{\partial s_p}\frac{\partial \widetilde{\sigma}_{pl}}{\partial s_q}\Big),
    \end{align*}
where $m=0,1,...M-1$. If we define a matrix $\mathbf{D}(m)$ as:
    \begin{align*}
    D_{ij}(m)&=\delta_{ij}(m)+r\Delta{t}+\sum_{k=1}^2\frac{\partial \widetilde{\sigma}_{ik}}{\partial s_j}\Delta{W}_{k}(m+1)
    \\
    &+\frac{1}{2}\sum_{k,l,p=1}^2\Big(\frac{\partial^2\widetilde{\sigma}_{ik}}{\partial s_p\partial s_j}\widetilde{\sigma}_{pj}\big(\mathbf{S}(m)\big)+\frac{\partial \widetilde{\sigma}_{ik}}{\partial s_p}\frac{\partial \widetilde{\sigma}_{pl}}{\partial s_j}\Big),
    \end{align*}
then the evolution of $\mathbf{\Delta}$ can be redefined using matrix recursion as follows:
    \begin{align*}
    \mathbf{\Delta}(m+1)=D(m)\mathbf{\Delta}(m),
    \end{align*}
where $\mathbf{\Delta}(t)=I$. Similar to estimating the option price, we a can use the Delta from the Magrabe option as a multivariate control variate. We adopt the method presented by Rubinstein and Marcus (1985) \cite{Rubin} and set up the estimator for Delta:
    \begin{align}
    \label{EstimatorDelta}
    \overline{\mathbf{\Delta}}&=\frac{e^{-r(T-t)}}{N}\sum_{i=1}^N\Big(\mathbf{\Delta}^{(i)}(M)+C_1\mathbf{\Delta}_{cv}^{(i)}(M)\Big)-C_1\mathbf{\Delta}_{Margrabe}.
    \end{align}
The variance of $\overline{\mathbf{\Delta}}$ is minimized when $\hat{C}_1=\Sigma_{\mathbf{\Delta}\mathbf{\Delta}_{cv}}\Sigma_{\mathbf{\Delta}_{cv}\mathbf{\Delta}_{cv}}^{-1}$.

\subsection{Experimental results}
\label{sec:experiments}
We implement our MC engine with alternating space and path parameter for the purpose of determining the effect on a $99\%$ Gaussian confidence interval (CI). For consistency, we fix a set of option parameters: $s_1=60$, $s_2=80$, $T=0.5$, $t=0$, $\sigma_1=0.4$, $\sigma_2=0.2$, $\rho=0.5$ and $r=0.05$. We also fix the price impact function parameters to: $\epsilon=0.04$ and $\beta=100$. The numerical results are presented below:
\begin{table}[H]
\caption{Space (N) Dimension MC Results}
\centering
\begin{tabular}{|c|c|c|c|c|c|c|} 
\hline
\textbf{N}&\textbf{M}& \textbf{$\overline{V}$} &\textbf{$99\%$ CI of $\overline{V}$}& \textbf{CI Length} & \textbf{CPU Time}
\\
\hline
$100$ & $100$ & $1.0008$ & $[0.998642,1.00295]$ & $0.0043124$ & 0.11s\\
\hline
$1000$ & $100$ & $1.00145$ & $[1.00088,1.00201]$ & $0.00112514$ & 1.08s\\
\hline
$10k$ & $100$ & $1.0013$ & $ [1.0011,1.00151] $ & $0.000412377$ & 9.84s \\
\hline
$100k$ & $100$ & $1.00134$ & $[1.00128,1.0014]$ & $0.000126287$ & 99.97s\\
\hline
$1m$ & $100$ & 1.00139 & $[1.00137,1.00141]$ & $0.0000405683$ & 1033.30s \\
\hline
\end{tabular}
\label{tab:miltable1}
\end{table}
\begin{table}[H]
\caption{Path (M) Dimension MC Results}
\centering
\begin{tabular}{|c|c|c|c|c|c|c|} 
\hline
\textbf{N}&\textbf{M}& \textbf{$\overline{V}$} &\textbf{$99\%$ CI of $\overline{V}$}& \textbf{CI Length} & \textbf{CPU Time}
\\
\hline
$1000$ & $100$ & $1.00145$ & $[1.00088,1.00201]$ & $0.00112514$ & 1.08s\\
\hline
$1000$ & $200$ & $1.00129$ & $[1.00072,1.00186]$ & $0.00113221$ & 3.82s\\
\hline
$1000$ & $400$ & $1.00181$ & $ [1.00118,1.00244] $ & $0.00126276$ & 15.63s \\
\hline
$1000$ & $800$ & $1.00111$ & $[1.00053,1.00169]$ & $0.00115177$ & 66.96s\\
\hline
$1000$ & $1600$ & $1.00151$ & $[1.00088,1.00215]$ & $0.00127271$ & 237.92s \\
\hline
\end{tabular}
\label{tab:miltable2}
\end{table}
One observation from our experiment is that as the path dimension doubles, the computation time almost quadruples. This is in agreement with Higham's assertion on the complexity cost of Milstein Scheme.
\begin{figure}[H]
\centering
\begin{subfigure}[b]{0.475\textwidth}
\includegraphics[height=4.19cm]{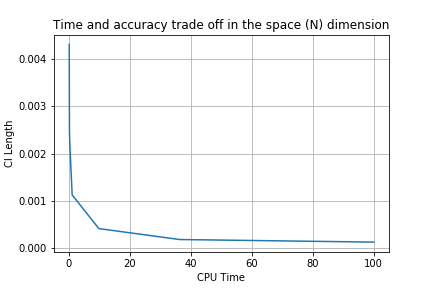}
\label{graph:Ndim}
\end{subfigure}\quad
\begin{subfigure}[b]{0.475\textwidth}
\includegraphics[height=4.19cm]{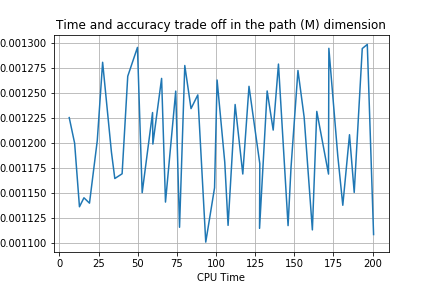}
\label{graph:Mdim}
\end{subfigure}
\end{figure}
From a practitioner's point of view, we must consider the trade off between time complexity and accuracy of estimation. In the first graph above, we observe that as the computation time increases in space dimension, the length of CI exponentially decays. However, as we increase the computation time in path dimension, there is an ambiguous effect on CI length. This is emphasized in the second graph above. It is fairly self-explanatory that we should focus our computation resources on the space dimension to get the best complexity vs accuracy trade off.
\par
We also would like to compare FLMM against the frictionless model. In particular, we want to confirm the liquidity impact of FLMM requires a strictly positive valuation adjustment.
\begin{table}[H]
\caption{Analysis of Liquidity Premium}
\centering
\begin{tabular}{ccccc} 
\hline
$N=100000$, $M=100$&\textbf{$s_1$}& \textbf{FLMM} &\textbf{Margrabe}& \textbf{Excess Price}
\\
\hline
$s_2=10$ & $10$ & $0.98591$ & $0.974767$ & $0.011143$\\
& $20$ & $0.00237199$ & $0.00236962$ & $0.00000236655$ \\
\hline
& $10$ & $0.00237514$ & $0.00236962$ & $0.00000552152$ \\
$s_2=20$ & $20$ & $1.96065$ & $1.94953$ & $0.0111181$\\
& $30$ & $0.122669$ & $0.121575$ & $0.0010937$\\
\hline
& $20$ & $0.122554$ & $0.121575$ & $0.000978586$ \\
$s_2=30$ & $30$ & $2.93598$ & $2.9243$ & $0.0116819$\\
& $40$ & $10.504$ & $10.499$ & $0.00496871$\\
\hline
& & \vdots & & \\
\hline
& $90$ & $5.10866$ & $5.09879$ & $0.00987472$\\
$s_2=100$ & $100$ & $9.75846$ & $9.74767$ & $0.010783$\\
\hline
\end{tabular}
\label{tab:miltable3}
\end{table}
From our experiments, we indeed observe an excess in the option price due to the FLMM. This premium seems to be the greatest for at-the-money options. Furthermore, as the trade-cost-per-share parameter $\epsilon$ increases, we observe a higher liquidity premium. This effect is illustrated in the figures below.
\begin{figure}[H]
\caption{Liquidity Value Adjustment}
\centering
\includegraphics[width=8cm]{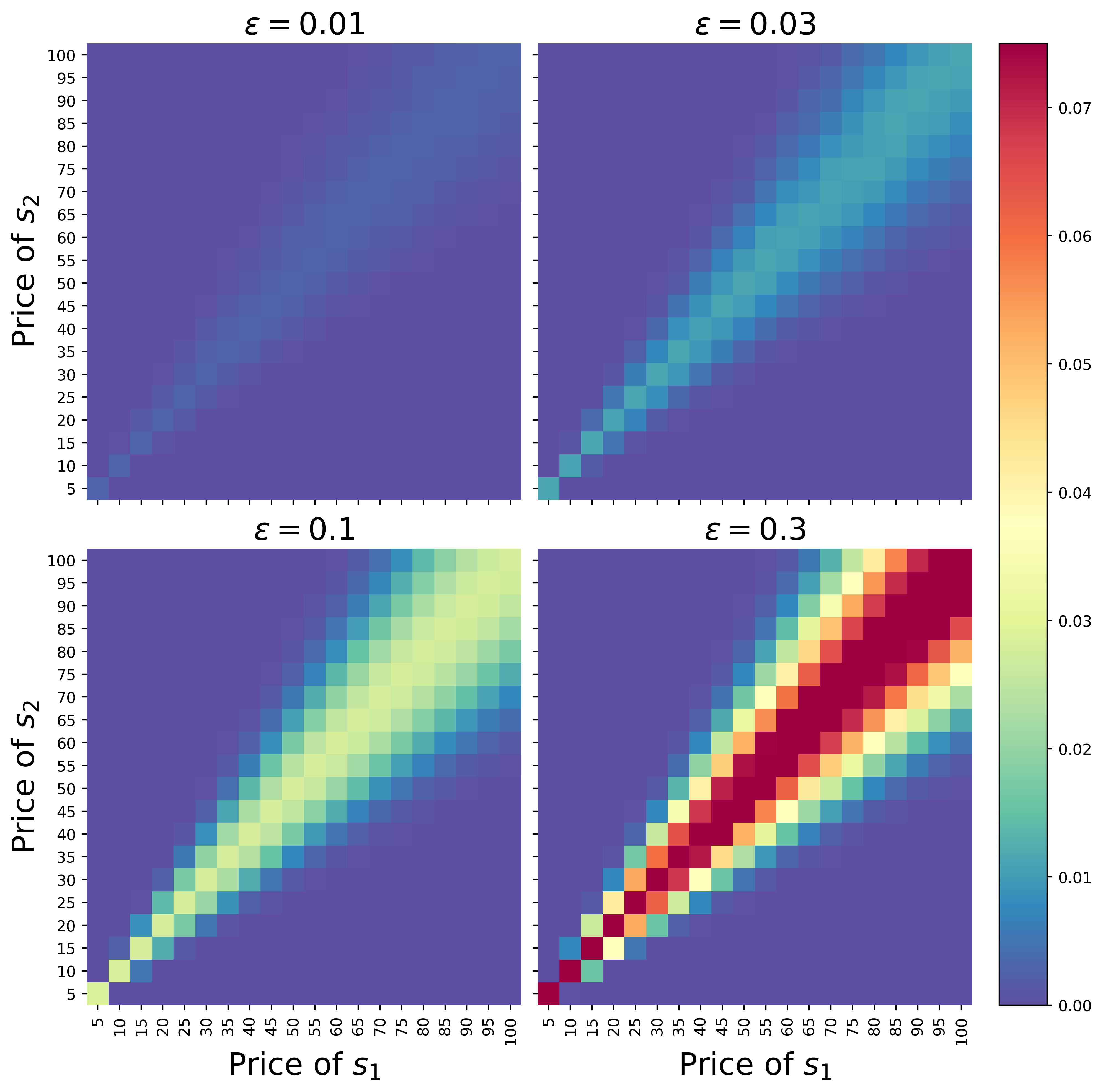}
\label{fig:excess}
\end{figure}
It only appears natural to be also interested in the liquidity adjustment for Delta. Using the Margrabe Delta as a reference, one would expect that the strictly greater price of our illiquid asset $1$ would cause $\Delta_1$ to be greater and $\Delta_2$ to be less. Empirically, we observe an excess effect in $\Delta_1$, but we also observed an excess effect in $\Delta_2$. We illustrate this surprising result in the figures below.
\begin{figure}[H]
\caption{Liquidity Delta Adjustment}
\centering
\begin{subfigure}[b]{0.475\textwidth}
\includegraphics[height=6cm]{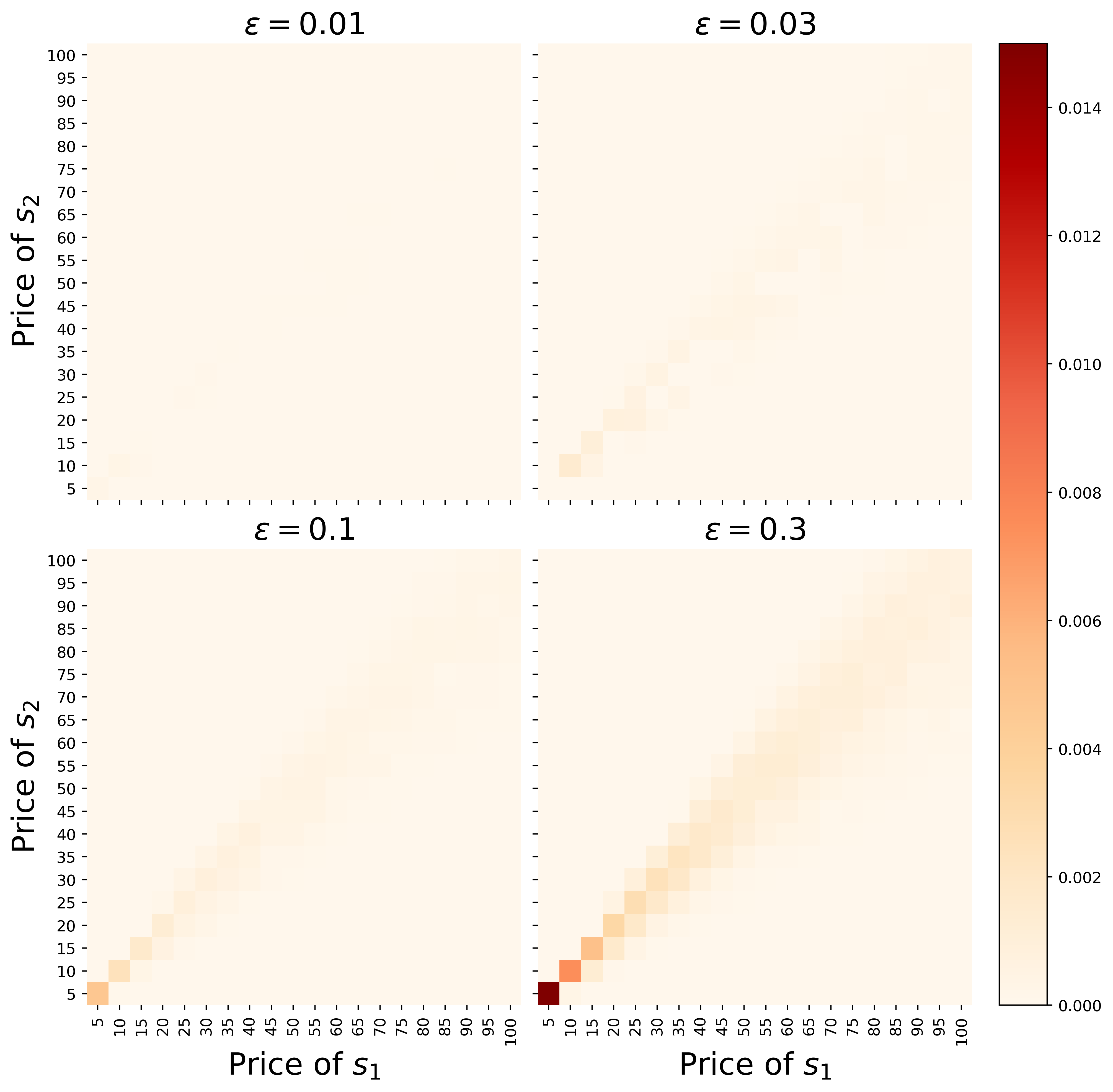}
\label{graph:HeatD1}
\end{subfigure}\quad
\begin{subfigure}[b]{0.475\textwidth}
\includegraphics[height=6cm]{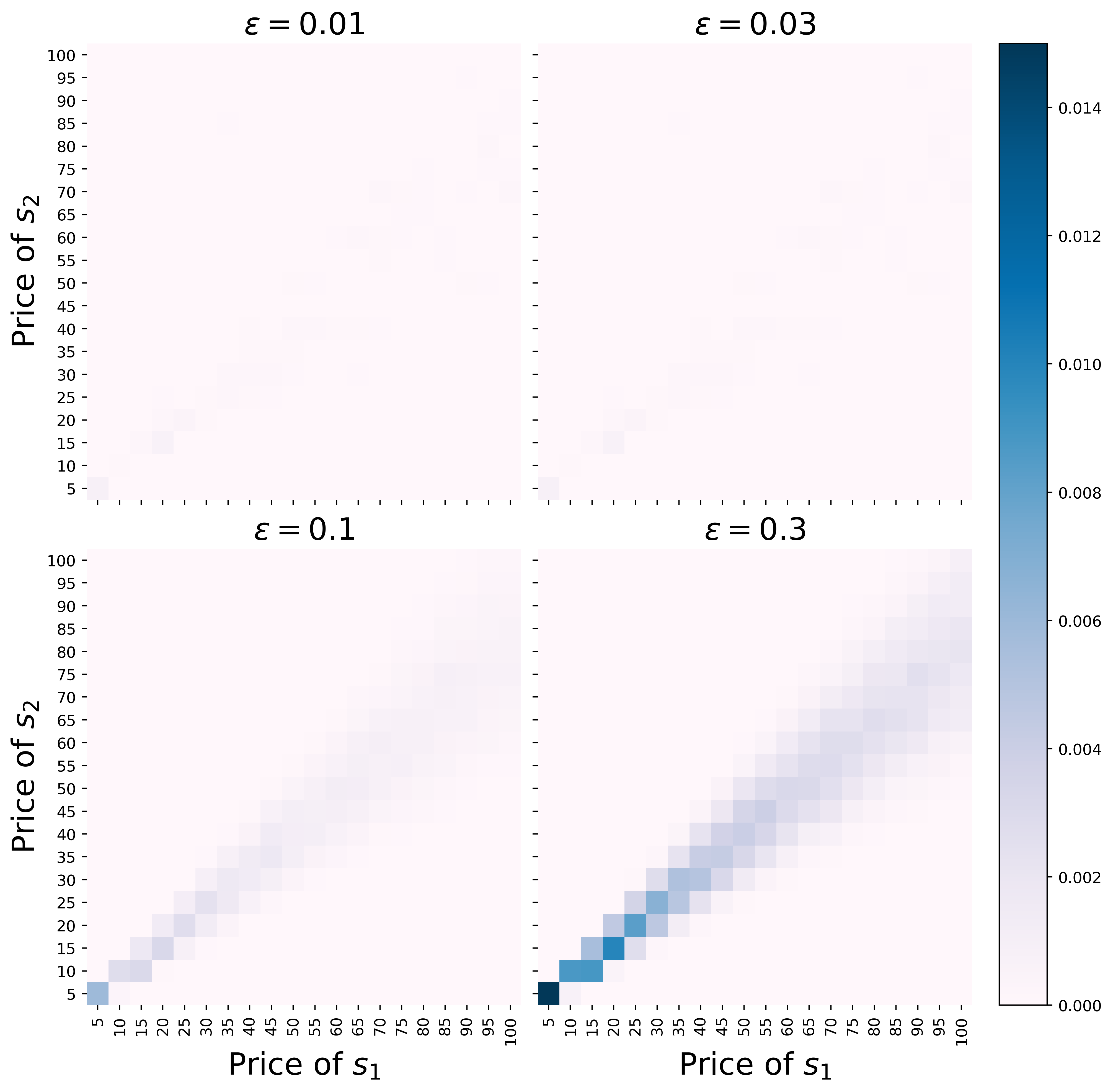}
\label{graph:HeatD2}
\end{subfigure}
\end{figure}
These positive Delta adjustments effects also reach their respective pinnacles when the option is at-the-money. In a model with transaction costs these
Delta adjustments would add extra value to the option price. 
\section{Deep Learning Method}
\label{sec:deep}
\par
Artificial neural network have powerful predictive capabilities, one of the first versions are the FFN. This network is structured as a sequence of layers, with various numbers of neurons embedded in each layer. We shall use $N$ to denote the number of layers, and $n_i$ to denote the number of neurons in the $i$-th layer. In a fully connected FFN, each neuron in the current layer has a connection with each neuron in the subsequent layer. The strength of these connections are known as \textit{weights}, we denote the weights connected to the $j$-th neuron in the $i$-th layer as $\mathbf{w}^{[i]}_j$. Each neuron also carries a unique bias term $b^{[i]}_j$, this term has a similar effect as the regression intercept. The final component of a neuron is the activation function $f(z)$, similar to linking functions of non-linear regression, its purpose is to add non-linearity. In this study, we used these types of activation functions:
\begin{table}[H]
\caption{Activation Functions}
\label{active}
\centering
\begin{tabular}{|c|c|} 
\hline
Type & Activation Function\\
\hline
ReLU & $f(z)=\max(z,0)$\\
\hline
SoftPlus & $f(z)=\log(1+e^z)$\\
\hline
\end{tabular}
\end{table}
The operation of a neuron can be expressed as:
\begin{gather*}
z^{[i]}_j=\mathbf{w}^{[i]}_j\mathbf{h}^{[i-1]}+b^{[i]}_j,
\\
h^{[i-1]}_j=f(z^{[i]}_j).
\end{gather*}
We also provide a computation graph on the $j$-th neuron in the $i$-th layer:
\begin{figure}[H]
\centering
\resizebox{5cm}{5cm}{%
\begin{tikzpicture}[
innode/.style={circle, draw=yellow, fill=yellow!20, very thick, minimum size=2mm},
node/.style={circle, draw=yellow, fill=yellow!20, very thick, minimum size=2mm},
textnode/.style={rectangle, draw=white, fill=white!20, very thick, minimum size=5mm},
]

\node[innode] ($y$) at (-3.5,3) {$h_1^{[i-1]}$};
\node[innode] ($y$) at (-3.5,1.5) {$h_2^{[i-1]}$};
\node[innode] ($y$) at (-3.5,0) {$h_3^{[i-1]}$};
\node[innode] ($y$) at (-3.5,-3) {$h_n^{[i-1]}$};
\node[textnode] ($y$) at (-2,2.5) {$w^{[i]}_{i1}$};
\node[textnode] ($y$) at (-2,1.3) {$w^{[i]}_{i2}$};
\node[textnode] ($y$) at (-2,0.3) {$w^{[i]}_{i3}$};
\node[textnode] ($y$) at (-2,-1.25) {$w^{[i]}_{in_i}$};
\draw[thick,->] (-2.8,3) -- (-0.75,0.2);
\draw[thick,->] (-2.8,1.5) -- (-0.75,0.1);
\draw[thick,->] (-2.8,0) -- (-0.75,0);
\draw[thick,->] (-2.8,-3) -- (-0.75,-0.2);
\draw[decorate sep={0.4mm}{1.5mm},fill] (-3.5,-2.2) -- (-3.5,-0.75);
\node[node] ($y$) at (0,0) {$f\big(z^{[i]}_j\big)$};
\node[textnode] ($y$) at (1.5,0.3) {$h^{[i]}_j$};
\draw[thick,->] (0.75,0) -- (3,0);

\end{tikzpicture}
}
\end{figure}
This process is repeated for every single neuron, which allows us to transverse through the network and arrive at the output layer $h^{[N]}=\hat{y}$ (For the purpose of option pricing, we have a single output $h^{[N]}$, but in general $h^{[N]}$ is a vector). This entire process is often referred to as \textit{forward propagation}. The figure below describes the FFN architecture deployed to price Exchange Option under FLMM:
\begin{figure}[H]
\centering
\resizebox{9.36cm}{6cm}{%
\begin{tikzpicture}[
innode/.style={circle, draw=red, fill=red!20, very thick, minimum size=7mm},
hidnode/.style={circle, draw=yellow, fill=yellow!20, very thick, minimum size=9mm},
outnode/.style={circle, draw=green, fill=green!20, very thick, minimum size=7mm},
textnode/.style={rectangle, draw=white, fill=white!20, very thick, minimum size=5mm},
]

\node[textnode] at (-4.75,-5.9) {\tiny{Input Layer}};
\node[textnode] at (-2.75,-5.5) {\tiny{Hidden Layer 1}};
\node[textnode] at (-0.5,-5.5) {\tiny{Hidden Layer 2}};
\node[textnode] at (1.8,-5.5) {\tiny{Hidden Layer 3}};
\node[textnode] at (4.05,-5.5) {\tiny{Hidden Layer 4}};
\node[textnode] at (6.05,-2.85) {\tiny{Output Neuron}};

\node[innode] ($x_1$) at (-4.75, 1) {$s_1$};
\node[innode] ($x_1$) at (-4.75, 0) {$s_2$};
\node[innode] ($x_2$) at (-4.75,-1) {$\sigma_1$};
\node[innode] ($x_n$) at (-4.75,-2) {$\sigma_2$};
\node[innode] ($x_n$) at (-4.75,-3) {$r$};
\node[innode] ($x_n$) at (-4.75,-4) {$\rho$};
\node[innode] ($x_n$) at (-4.75,-5) {$\tau$};
\draw[blue, very thick] (-5.25,-5.5) rectangle (-4.25,1.5);

\draw[->] (-4.35, 1) -- (-3.25,0.5);
\draw[->] (-4.35, 1) -- (-3.25,-0.75);
\draw[->] (-4.35, 1) -- (-3.25,-2.0);
\draw[->] (-4.35, 1) -- (-3.25,-4.5);
\draw[->] (-4.35, 0) -- (-3.25,0.5);
\draw[->] (-4.35, 0) -- (-3.25,-0.75);
\draw[->] (-4.35, 0) -- (-3.25,-2.0);
\draw[->] (-4.35, 0) -- (-3.25,-4.5);
\draw[->] (-4.35, -1) -- (-3.25,0.5);
\draw[->] (-4.35, -1) -- (-3.25,-0.75);
\draw[->] (-4.35, -1) -- (-3.25,-2.0);
\draw[->] (-4.35, -1) -- (-3.25,-4.5);
\draw[->] (-4.35, -2) -- (-3.25,0.5);
\draw[->] (-4.35, -2) -- (-3.25,-0.75);
\draw[->] (-4.35, -2) -- (-3.25,-2.0);
\draw[->] (-4.35, -2) -- (-3.25,-4.5);
\draw[->] (-4.35, -3) -- (-3.25,0.5);
\draw[->] (-4.35, -3) -- (-3.25,-0.75);
\draw[->] (-4.35, -3) -- (-3.25,-2.0);
\draw[->] (-4.35, -3) -- (-3.25,-4.5);
\draw[->] (-4.35, -4) -- (-3.25,0.5);
\draw[->] (-4.35, -4) -- (-3.25,-0.75);
\draw[->] (-4.35, -4) -- (-3.25,-2.0);
\draw[->] (-4.35, -4) -- (-3.25,-4.5);
\draw[->] (-4.35, -5) -- (-3.25,0.5);
\draw[->] (-4.35, -5) -- (-3.25,-0.75);
\draw[->] (-4.35, -5) -- (-3.25,-2.0);
\draw[->] (-4.35, -5) -- (-3.25,-4.5);

\node[hidnode] ($x_1$) at (-2.75, 0.5) {$h_{1}^{[1]}$};
\node[hidnode] ($x_2$) at (-2.75,-0.75) {$h_{2}^{[1]}$};
\node[hidnode] ($x_n$) at (-2.75,-2.0) {$h_{3}^{[1]}$};
\draw[decorate sep={0.4mm}{1.5mm},fill] (-2.75, -2.75) -- (-2.75,-3.75);
\node[hidnode] ($x_n$) at (-2.75,-4.5) {$h_{n}^{[1]}$};
\draw[blue, very thick] (-3.35,-5.1) rectangle (-2.15,1.1);

\draw[->] (-2.25,0.5) -- (-1,0.5);
\draw[->] (-2.25,0.5) -- (-1,-0.75);
\draw[->] (-2.25,0.5) -- (-1,-2.0);
\draw[->] (-2.25,0.5) -- (-1,-4.5);
\draw[->] (-2.25,-0.75) -- (-1,0.5);
\draw[->] (-2.25,-0.75) -- (-1,-0.75);
\draw[->] (-2.25,-0.75) -- (-1,-2.0);
\draw[->] (-2.25,-0.75) -- (-1,-4.5);
\draw[->] (-2.25,-2) -- (-1,0.5);
\draw[->] (-2.25,-2) -- (-1,-0.75);
\draw[->] (-2.25,-2) -- (-1,-2.0);
\draw[->] (-2.25,-2) -- (-1,-4.5);
\draw[->] (-2.25,-4.5) -- (-1,0.5);
\draw[->] (-2.25,-4.5) -- (-1,-0.75);
\draw[->] (-2.25,-4.5) -- (-1,-2.0);
\draw[->] (-2.25,-4.5) -- (-1,-4.5);

\node[hidnode] ($x_1$) at (-0.5, 0.5) {$h_{1}^{[2]}$};
\node[hidnode] ($x_2$) at (-0.5,-0.75) {$h_{2}^{[2]}$};
\node[hidnode] ($x_n$) at (-0.5,-2.0) {$h_{3}^{[2]}$};
\draw[decorate sep={0.4mm}{1.5mm},fill] (-0.5, -2.75) -- (-0.5,-3.75);
\node[hidnode] ($x_n$) at (-0.5,-4.5) {$h_{n}^{[2]}$};
\draw[blue, very thick] (-1.10,-5.1) rectangle (0.1,1.1);

\draw[->] (0,0.5) -- (1.25,0.5);
\draw[->] (0,0.5) -- (1.25,-0.75);
\draw[->] (0,0.5) -- (1.25,-2.0);
\draw[->] (0,0.5) -- (1.25,-4.5);
\draw[->] (0,-0.75) -- (1.25,0.5);
\draw[->] (0,-0.75) -- (1.25,-0.75);
\draw[->] (0,-0.75) -- (1.25,-2.0);
\draw[->] (0,-0.75) -- (1.25,-4.5);
\draw[->] (0,-2) -- (1.25,0.5);
\draw[->] (0,-2) -- (1.25,-0.75);
\draw[->] (0,-2) -- (1.25,-2.0);
\draw[->] (0,-2) -- (1.25,-4.5);
\draw[->] (0,-4.5) -- (1.25,0.5);
\draw[->] (0,-4.5) -- (1.25,-0.75);
\draw[->] (0,-4.5) -- (1.25,-2.0);
\draw[->] (0,-4.5) -- (1.25,-4.5);

\node[hidnode] ($x_1$) at (1.75, 0.5) {$h_{1}^{[3]}$};
\node[hidnode] ($x_2$) at (1.75,-0.75) {$h_{2}^{[3]}$};
\node[hidnode] ($x_n$) at (1.75,-2.0) {$h_{3}^{[3]}$};
\draw[decorate sep={0.4mm}{1.5mm},fill] (1.75, -2.75) -- (1.75,-3.75);
\node[hidnode] ($x_n$) at (1.75,-4.5) {$h_{n}^{[3]}$};
\draw[blue, very thick] (1.15,-5.1) rectangle (2.35, 1.1);

\draw[->] (2.25,0.5) -- (3.5,0.5);
\draw[->] (2.25,0.5) -- (3.5,-0.75);
\draw[->] (2.25,0.5) -- (3.5,-2.0);
\draw[->] (2.25,0.5) -- (3.5,-4.5);
\draw[->] (2.25,-0.75) -- (3.5,0.5);
\draw[->] (2.25,-0.75) -- (3.5,-0.75);
\draw[->] (2.25,-0.75) -- (3.5,-2.0);
\draw[->] (2.25,-0.75) -- (3.5,-4.5);
\draw[->] (2.25,-2) -- (3.5,0.5);
\draw[->] (2.25,-2) -- (3.5,-0.75);
\draw[->] (2.25,-2) -- (3.5,-2.0);
\draw[->] (2.25,-2) -- (3.5,-4.5);
\draw[->] (2.25,-4.5) -- (3.5,0.5);
\draw[->] (2.25,-4.5) -- (3.5,-0.75);
\draw[->] (2.25,-4.5) -- (3.5,-2.0);
\draw[->] (2.25,-4.5) -- (3.5,-4.5);

\node[hidnode] ($x_1$) at (4, 0.5) {$h_{1}^{[4]}$};
\node[hidnode] ($x_2$) at (4, -0.75) {$h_{2}^{[4]}$};
\node[hidnode] ($x_n$) at (4, -2.0) {$h_{3}^{[4]}$};
\draw[decorate sep={0.4mm}{1.5mm},fill] (4, -2.75) -- (4,-3.75);
\node[hidnode] ($x_n$) at (4,-4.5) {$h_{n}^{[4]}$};
\draw[blue, very thick] (3.4,-5.1) rectangle (4.6, 1.1);

\draw[->] (4.5,0.5) -- (5.6,-2);
\draw[->] (4.5,-0.75) -- (5.6,-2);
\draw[->] (4.5,-2) -- (5.6,-2);
\draw[->] (4.5,-4.5) -- (5.6,-2);

\node[outnode] ($y$) at (6,-2) {$V$};
\draw[blue, very thick] (5.5, -1.55) rectangle (6.5, -2.45);

\end{tikzpicture}
}
\end{figure}
\par
The \textit{loss function} measures the goodness of fit. We use mean squared error (MSE) as the loss function, which is commonly used in regression analysis. We will use MSE to evaluate the result of the forward propagation. This evaluation is preformed for every $B$ input, $B$ is known as the \textit{batch size}. Our loss function is formulated as:
    \begin{align*}
    \mathcal{L}(\mathbf{\hat{y}},\mathbf{y})=\sum_{k=1}^{B}(\hat{y}_k-y_k)^2.
    \end{align*}
Minimization of the loss function follows the steepest descent idea, so one has to compute gradient fields with respect to the weights and biases. This is often accomplished through algorithmic differentiation referenced as \textit{back propagation}. Then, the weights and biases are updated in the direction of the gradient field, in hope of discovering a ``good enough" local minimum. The common choice of methodology for optimization is the \textit{batch gradient descent} method. This method is demonstrated as: 
    \begin{gather*}
    \mathbf{w}_j^{[i],(new)}=\mathbf{w}^{[i],(old)}_j-\alpha\frac{\partial \mathcal{L}}{\partial \mathbf{w}_j^{[i],(old)}},
    \\
    \mathbf{b}_j^{[i],(new)}=\mathbf{b}_j^{[i],(old)}-\alpha\frac{\partial \mathcal{L}}{\partial \mathbf{b}_j^{[i],(old)}},
    \\
    \text{for}\quad{}j=1,2,...n_i,\text{ and }i=1,2,...N.
    \end{gather*}
In the above expression, $\alpha$ is the \textit{learning rate}. 
\par
One batch of forward propagation combined with one instance of back propagation is considered as one iteration of batch training. An \textit{epoch} encompasses a series of batch training that exhausts the entire data set. Normally, the training is either repeated for a fixed number of epochs, or stopped early when the loss function ceases to decrease further.
\par
The central theorem in neural networks is the \textit{universal approximation theorem}. This theorem highlights the approximation power of FFNs. Hornik (1989) \cite{HORNIK} established the fact that deep FFNs are universal approximators, in other words, any function can be accurately approximated by some deep FFN. Since option prices are smooth solutions of PDEs, then it should be feasible to predict these solutions with FFNs.
\subsection{Deeply Learning Derivative}
\par
Option pricing can often be computationally expensive. Ferguson and Green (2018) \cite{Ferguson} demonstrated the power of FFN, and achieved a much faster speed than traditional MC engines when pricing baskets. However, the initial costs comes from generation of option inputs, as well as, estimating the corresponding option values through MC engines. Furthermore, training and calibrating the FFN takes tedious effort as well. Nevertheless, these ``costs" are reasonable to large financial institutions, and at least in theory, will integrate well with their operations. This is largely because both the data generation and network training can be done offline, when the markets are closed. In addition, the input space can be restricted to reflect a set of likely market scenarios.
\par
To build a FFN pricer for our FLMM Exchange Option, we will use Algorithm \eqref{Estimator} as the underlying MC engine. Our estimator has 7 parameters
\\
\big($\mathbf{x}=(s_1,s_2,r,\rho,\sigma_1,\sigma_2,\tau)$\big), a set of these parameters count as $1$ sample input. It is important to emphasize the particular distribution used to generate the inputs, these should be unique for each option. Indeed, some factors to be considered when choosing the distributions are:
\begin{itemize}
  \item The physical meaning of each underlying parameter.
  \item The payoff function itself should be considered because it is pointless to generate excessive of out-of-money MC paths.
\end{itemize}
\par
Generating the inputs in judicious ways will not only help the loss to converge faster, but will also help the FFN to approximate a meaningful solution.
In our case, we adopted an even spilt between $2$ data generation schemes. The first method allows us to sample unbiasly from the entire input space. The second method will allow us to sample more realistic input parameters, as well as, capture more in-the-money payout paths.
\begin{table}[H]
\caption{Data Generation Schemes}
\centering
\begin{tabular}{|c|c|c|}
\hline
Parameter & Method 1 & Method 2\\
\hline
$s_1$ & $s_1\sim\mathcal{U}(0,100)$ & $s_1\sim50\exp(X_1)$, $X_1\sim\mathcal{N}(0.5,0.25)$\\
\hline
$s_2$ & $s_2\sim\mathcal{U}(0,100)$ & $s_2\sim50\exp(X_1-X_2)$, $X_2\sim\mathcal{N}(0.5,0.25)$\\
\hline
$\sigma_1$ & $\sigma_1\sim\mathcal{U}(0,0.5)$ & $\sigma_1\sim\mathcal{U}(0,0.5)$\\
\hline
$\sigma_2$ & $\sigma_2\sim\mathcal{U}(0,0.5)$ & $\sigma_2\sim\mathcal{U}(0,0.5)$\\
\hline
$r$ & $r\sim\mathcal{U}(0,0.1)$ & $r\sim\mathcal{U}(0,0.1)$\\
\hline
$\rho$ & $\rho\sim\mathcal{U}(1,-1)$ & $\rho\sim{}2(X_3-0.5)$, $X_3\sim\mathcal{\beta}(5,2)$\\
\hline
$\tau$ & $\tau\sim\mathcal{U}(0,2)$ & $\tau\sim\mathcal{U}(0,2)$\\
\hline
\end{tabular}
\end{table}
The implementation of \textit{Deeply Learning Derivative} method can be synthesized by the following programming architectural graph:

\begin{figure}[H]
\centering
\resizebox{7cm}{7cm}{%
\begin{tikzpicture}[
node/.style={rectangle, draw=black, fill=blue!20, very thick, minimum size=5mm},
textnode/.style={rectangle, draw=gray!40, fill=gray!40, very thick, minimum size=5mm},
outnode/.style={rectangle, draw=black, fill=green!20, very thick, minimum size=5mm},
]
\draw[black, very thick,fill=gray!40] (-4,0) rectangle (-1,2);
\node[textnode] ($x_1$) at (-2.5, 1.5) {Data Generation};
\node[node] ($x_1$) at (-2.5, 1) {Method 1};
\node[node] ($x_1$) at (-2.5, 0.5) {Method 2};
\draw[->,very thick,line width=0.75mm] (-2.5, 0) -- (0.75, -1.95);
\draw[->,very thick,line width=0.75mm] (-1, 1) -- (3, 1);

\draw[black, very thick,fill=gray!40] (3,0) rectangle (6,2);
\node[textnode] ($x_1$) at (4.5, 1.5) {MC Engine};
\node[node] ($x_1$) at (4.5, 1) {Milstein};
\node[node] ($x_1$) at (4.5, 0.5) {Control Variate};
\draw[->,very thick,line width=0.75mm] (4.5, 0) -- (1.25, -1.95);

\draw[black, very thick,fill=gray!40] (-1.5,-2) rectangle (3.5,-4.6);
\node[textnode] ($x_1$) at (1, -2.5) {Deep Feed Forward Network};
\node[node] ($x_1$) at (1, -3) {Training};
\node[node] ($x_1$) at (1, -3.55) {Cross-Validation};
\node[node] ($x_1$) at (1, -4.1) {Testing};

\draw[->,very thick,line width=0.75mm] (1, -4.6) -- (1, -6.5);
\draw[black, very thick,fill=gray!40] (-1.25, -6.5) rectangle (3.25,-7.5);
\node[outnode] ($x_1$) at (1, -7) {Export to Production};

\end{tikzpicture}
}
\end{figure}
\subsection{Experimental Results}
\par
The FFN contains 4 fully connected deep layers with $300$ ReLu neurons per layer. The output layer contain a single SoftPlus Neuron to ensure the prediction would be positive definite. We generated $1$ million inputs, and uses a relatively inaccurate MC engine ($N$=100,$M$=100) to construct the training set. The logic is it has been shown in practice a well-trained deep FFN has the ability to remove the inaccuracy of weak MC estimators. We trained the FNN with mini-batch size of $1024$, and updated the gradient with ADAM optimizer (2015) \cite{Kingma}. We performed validation with samples created from a highly accurate MC engine ($N$=100k,$M$=100), at a $100/1$ ratio. Initially, the FFN was set to train for $1000$ epochs. After $850$ epochs of training, the loss function cease to decreases further significantly. To prevent over-fitting, it is justifiable to apply early stopping.
\begin{figure}[H]
\centering
\begin{subfigure}[b]{0.475\textwidth}
\includegraphics[width=6.2cm]{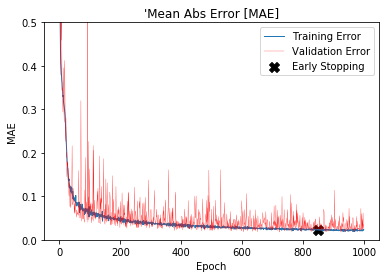}
\end{subfigure}\quad
\begin{subfigure}[b]{0.475\textwidth}
\includegraphics[width=6.2cm]{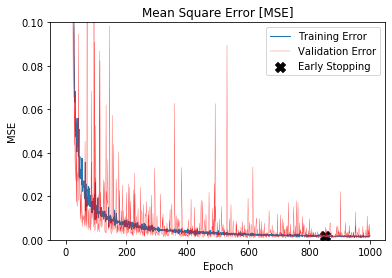}
\end{subfigure}
\end{figure}
\par
We observe both of the mean absolute error (MAE) and MSE of the validation set oscillate around the training set. Furthermore, the amplitude of the oscillation decreases as we train our network. This implies our network is learning to minimize in terms of $\mathcal{L}_1$ and $\mathcal{L}_2$ simultaneously. Another important observation is that the MAE error is more consistent than MSE. This implies the smaller errors matched up more consistently between training and validation set. Overall, we can conclude there is no significant over-fitting.
\par
In the testing phase, we generated $1000$ highly accurate samples with MC engine specification ($N$=100k, $M$=100). We test our trained network and came to the following testing results:
\begin{figure}[H]
\centering
\begin{subfigure}[b]{0.475\textwidth}
\includegraphics[width=5cm]{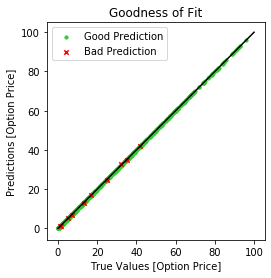}
\end{subfigure}\quad
\begin{subfigure}[b]{0.475\textwidth}
\includegraphics[width=6.2cm]{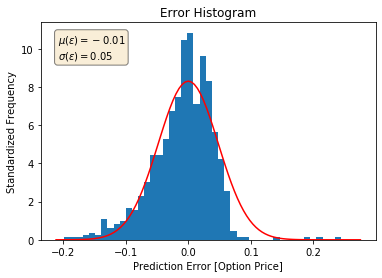}
\end{subfigure}
\end{figure}
\par
Moving on to analyzing the testing set, we observe a strong linear relationship between the predicted value and true value. This is an indication our net performs extremely well in predicting option prices. In the graph above, we observe relatively few misclassification points (error that are more than $3$ standard deviation away from the mean). Furthermore, we observes approximate normality in the residual histogram. The slight leptokurtic shape could hint hyper-parameter tuning might yield better results. However due to the close resemblances to normality, the source of error should be relatively homogeneous.
\par
We will use option parameters $s_1=60$, $s_2=80$, $\sigma_1=0.4$, $\sigma_2=0.2$, and $r=0.05$ to illustrate the capability of our trained neural net pricer in the table below:
\begin{table}[H]
\caption{FLMM Exchange Option Prediction Results}
\centering
\begin{tabular}{cccccc} 
\hline
N=1m, M=100 & $\tau=0.5$ & $\tau=1$ & $\tau=2$& Computation Time \\
\hline
$\rho=0.1$ & $1.9200259$ & $4.4097806$ & $8.4608331$& 1028.88s\\
Neural Net & $1.9289119$ & $4.3921754$ & $8.439103$ & 0.02s\\
\hline
$\rho=0.3$ & $1.4587309$ & $3.5837104$  & $7.1665319$& 992.23s\\
Neural Net & $1.4413449$ & $3.5714308$ & $7.149992$ & 0.02s\\
\hline
$\rho=0.5$ & $1.0013951$ & $2.7138871$  & $5.7575391$& 995.11\\
Neural Net & $1.00379605$ & $2.7073013$ & $5.7323675$ & 0.02s\\
\hline
$\rho=0.7$ & $0.5674676$ & $1.8049308$  & $4.2072835$& 1032.56\\
Neural Net & $0.5654936$ & $1.7964092$ & $4.183954$ & 0.02s\\
\hline
$\rho=0.9$ & $0.20259104$ & $0.88884414$ & $2.4888684$& 1012.16\\
Neural Net & $0.20034486$ & $0.8735817$ & $2.474048$ & 0.02s\\
\hline
\end{tabular}
\label{tab:Deep1}
\end{table}

\section{Conclusions}
\label{sec:conclusions}
\par
In this paper, we explored the effects of liquidity on pricing Exchange Options in a binary-asset market which we refer to as FLMM. In this market, trading only affected the price of one (the iliquid) asset. Subsequently, we established the existence and uniqueness of a strong solution for
the SDEs driving the asset prices within FLMM. By the standard replication argument we obtained a two-dimensional BS-like PDE, which
characterized the options prices. We simulated asset prices by Milstein algorithm and developed a fast-converging MC estimator with the Margrabe option as its control variate. Finally, we deployed deep learning and further improved the pricing speed.
\par
Conforming to our hypothesis, we observed the same transaction cost ``super-replication" effect as described by Liu and Yong \cite{LiunYong}. This paper may serve as a cautionary note for FX traders who regularly deal with option on iliquid currencies. Option issuers may also adopt this model as a LVA model for any type of Exchange Options.

\section{Appendix}
\label{appendix}
\par
This section will include some of the formulas and proofs left out from the main body.

\subsection{Finite Liquidity Existence and Uniqueness Theorem I}
\label{FLEUT1}
In this section, $\|\cdot\|$ and $|||\cdot|||$ represents the supremum norms:
\begin{gather*}
\|f\|=\sup_{(s_1,s_2)\in\mathcal{D}_1}|f(t,s_1,s_2)|,\quad\text{ where }\mathcal{D}_1=\big(\mathbbm{R}^+\big)^2,
\\
|||f|||=\sup_{(t,s_1,s_2)\in\mathcal{D}_2}|f(t,s_1,s_2)|,\quad\text{ where }\mathcal{D}_2=[0,T]\times\big(\mathbbm{R}^+\big)^2,
\end{gather*}
The following combination of conditions $(1)-(6)$ will guarantee existence and uniqueness of a strong solution for $S_1$.

    \begin{align*}
    (1)\qquad\|\lambda(s_1f_{s_1s_1}+s_1f_{s_1s_2}+f_{s_2}+s_2f_{s_2}+s_2f_{s_1s_1}+s_2f_{s_1s_2}+s_2f_{s_2s_2})\|<\infty.
    \end{align*}
    \begin{align*}
    (2)\qquad\|\big(\lambda_{s_1}+\lambda_{s_2}\big)\big(s_1f_{s_1}+s_2f_{s_1}+s_2f_{s_2}\big)\|<\infty.
    \end{align*}
    \begin{gather*}
    (3)\qquad|||1-\lambda{}f_{s_1}|||>\delta_0, \text{ for some }\delta_0>0. 
    \end{gather*}
    \begin{gather*}
    (4)\qquad\|(\lambda+\lambda_{s_1}+\lambda_{s_2})(f_{t}+f_{s_1}+f_{s_2}+f_{s_1s_1}+f_{s_1s_2}+f_{s_2s_2}+f_{s_1s_1s_2}
    \\
    +f_{s_1s_2s_2})\|<\infty.\nonumber
    \end{gather*}
    \begin{gather*}
    (5)\qquad\|\lambda(f_{ts_1}+f_{ts_2}+f_{s_1s_1s_1}+f_{s_2s_2s_2})+\lambda_{s_1}f_{s_1s_1s_1}+\lambda_{s_2}f_{s_2s_2s_2}\|<\infty.
    \end{gather*}
    \begin{gather*}
    (6)\qquad\|s_1f_{s_1s_1}+s_1f_{s_1s_2}+s_2f_{s_1s_2}+s_2f_{s_2s_2}+s_1^2f_{s_1s_1s_2}+s_1^2f_{s_1s_2s_2}+s_1s_2f_{s_1s_1s_2}
    \\
    +s_1s_2f_{s_1s_2s_2}+s_2^2f_{s_1s_2s_2}+s_2^2f_{s_2s_2s_2}\|<\infty.\nonumber
    \end{gather*} 

\begin{proof}
\label{sec:FiniteExist}
Recall that the SDE of $S_1$ is of the form:
    \begin{align*}
    dS_1(t)&=\bar{\mu}_1\big(\mathbf{S}(t)\big)dt+\bar{\sigma}_{11}\big(\mathbf{S}(t)\big)dW_1(t)+\bar{\sigma}_{12}\big(\mathbf{S}(t)\big)dW_2(t),
    \end{align*}
    where
    \begin{align*}
    &\bar{\mu}_1(t,s_1,s_2)=\frac{1}{1-\lambda{}f_{s_1}}\Big(\mu_1s_1+\lambda{}f_{t}+s_2\mu_2\lambda{}f_{s_2}+\frac{f_{s_1s_2}(\rho\sigma_1\sigma_2s_1s_2+\sigma_2^2s_2^2\lambda{}f_{s_2})}{1-\lambda{}f_{s_1}}
    \\
    &\qquad\qquad\quad+\frac{f_{s_1s_1}(\sigma_1^2s_1^2+\sigma^2_2s_2^2\lambda^2f_{s_2}^2+2\rho\sigma_1\sigma_2s_1s_2\lambda{}f_{s_2})}{2\big(1-\lambda{}f_{s_1}\big)^2}+\frac{\sigma_2^2s_2^2f_{s_2s_2}}{2}\Big),
    \\
    &\bar{\sigma}_{11}(t,s_1,s_2)=\frac{\sigma_1s_1}{1-\lambda{}f_{s_1}},\qquad\bar{\sigma}_{12}(t,s_1,s_2)=\frac{\sigma_2s_2\lambda{}f_{s_2}}{1-\lambda{}f_{s_1}}.
    \end{align*}
\par
Following the classical existence uniqueness result for SDEs, we have to show the functions $\bar{\mu}_{1}(t,s_1,s_2)$, $\bar{\sigma}_{11}(t,s_1,s_2)$ and $\bar{\sigma}_{12}(t,s_1,s_2)$ are uniformly Lipschitz continuous with respect to $\|\cdot\|$. Thus, it is sufficient to check the boundedness of their respective partial derivatives. Computing the derivatives, we have:
\begin{align*}
\big[\bar{\sigma}_{11}\big]_{s_1}&=\sigma_1\big(\frac{1}{1-\lambda{}f_{s_1}}+\frac{s_1(\lambda_{s_1}f_{s_1}+\lambda{}f_{s_1s_1})}{(1-\lambda{}f_{s_1})^2}\big),
\\
\big[\bar{\sigma}_{11}\big]_{s_2}&=\sigma_1s_1\frac{\lambda_{s_2}f_{s_1}+\lambda{}f_{s_1s_2}}{(1-\lambda{}f_{s_1})^2},
\\
\big[\bar{\sigma}_{12}\big]_{s_1}&=\sigma_2s_2\big(\frac{(\lambda_{s_1}f_{s_2}+\lambda{}f_{s_1s_2})}{1-\lambda{}f_{s_1}}+\frac{\lambda{}f_{s_2}(\lambda{}f_{s_1s_1}+\lambda_{s_1}f_{s_1})}{(1-\lambda{}f_{s_1})^2}\big),
\\
\big[\bar{\sigma}_{12}\big]_{s_2}&=\sigma_2\big(\frac{\lambda{}f_{s_2}+s_2(\lambda_{s_2}f_{s_2}+\lambda{}f_{s_2s_2})}{1-\lambda{}f_{s_1}}+\lambda\frac{s_2f_{s_2}(\lambda_{s_2}f_{s_1}+\lambda{}f_{s_1s_2})}{(1-\lambda{}f_{s_1})^2}\big).
\end{align*}
We can clearly see the boundedness requirement for $\big[\bar{\sigma}_{11}\big]_{s_1}$, $\big[\bar{\sigma}_{12}\big]_{s_1}$, $\big[\bar{\sigma}_{11}\big]_{s_2}$ and $\big[\bar{\sigma}_{12}\big]_{s_2}$ can be condensed into:
\begin{gather}\label{itocond1}
\|\lambda(s_1f_{s_1s_1}+s_1f_{s_1s_2}+f_{s_2}+s_2f_{s_2}+s_2f_{s_1s_1}+s_2f_{s_1s_2}+s_2f_{s_2s_2})\|<\infty,
\\
\label{itocond2}
\|\big(\lambda_{s_1}+\lambda_{s_2}\big)\big(s_1f_{s_1}+s_2f_{s_1}+s_2f_{s_2}\big)\|<\infty.
\end{gather}
Furthermore, we will require the denominator terms in the partial derivatives above to satisfy:
\begin{gather}\label{itocond3}
|||1-\lambda{}f_{s_1}|||>\delta_0, \text{ for some }\delta_0>0. 
\end{gather}
The partial derivatives $\big[\bar{\mu}_1s_1\big]_{s_1}$ and $\big[\bar{\mu}_1s_1\big]_{s_2}$ are:
\begin{align*}
\big[\bar{\mu}_1s_1\big]_{s_1}&=\mu_1\big(\frac{1}{1-\lambda{}f_{s_1}}+\frac{s_1(\lambda_{s_1}f_{s_1}+\lambda{}f_{s_1s_1})}{(1-\lambda{}f_{s_1})^2}\big)+\frac{\lambda{}f_{ts_1}+\lambda_{s_1}f_{t}}{1-\lambda{}f_{s_1}}
\\
&+\frac{\lambda{}f_{t}(\lambda_{s_1}f_{s_1}+\lambda{}f_{s_1s_1})}{(1-\lambda{}f_{s_1})^2}+\mu_2s_2\big(\frac{\lambda{}f_{s_1s_2}+\lambda_{s_1}f_{s_2}}{1-\lambda{}f_{s_1}}+\frac{\lambda{}f_{s_2}(\lambda_{s_1}f_{s_1}+\lambda{}f_{s_1s_1})}{(1-\lambda{}f_{s_1})^2}\big)
\\
&+\frac{1}{2}\sigma_1^2\big(\frac{s_1^2f_{s_1s_1s_1}+2s_1f_{s_1s_1}}{(1-\lambda{}f_{s_1})^3}+3\frac{s_1^2f_{s_1s_1}(\lambda_{s_1}f_{s_1}+\lambda{}f_{s_1s_1})}{(1-\lambda{}f_{s_1})^4}\big)
\\
&+\frac{1}{2}\sigma_2^2s_2^2\big(\frac{\lambda^2f_{s_2}^2f_{s_1s_1s_1}+2\lambda{}f_{s_2}f_{s_1s_1}(\lambda_{s_1}f_{s_2}+\lambda{}f_{s_1s_2})}{(1-\lambda{}f_{s_1})^3}
\\
&+3\frac{\lambda^2f_{s_2}^2f_{s_1s_1}(\lambda_{s_1}f_{s_1}+\lambda{}f_{s_1s_1})}{(1-\lambda{}f_{s_1})^4}\big)
\\
&+\rho\sigma_1\sigma_2s_2\big(\frac{s_1\lambda{}f_{s_2}f_{s_1s_1s_1}+s_1f_{s_1s_1}(\lambda_{s_1}f_{s_2}+\lambda{}f_{s_1s_2})+\lambda{}f_{s_1s_1}f_{s_2}}{(1-\lambda{}f_{s_1})^3}
\\
&+3\frac{s_1\lambda{}f_{s_2}f_{s_1s_1}(\lambda_{s_1}f_{s_1}+\lambda{}f_{s_1s_1})}{(1-\lambda{}f_{s_1})^4}\big)
\\
&+\rho\sigma_1\sigma_2s_2\big(\frac{f_{s_1s_2}+s_1f_{s_1s_1s_2}}{(1-\lambda{}f_{s_1})^2}+2\frac{s_1(\lambda_{s_1}f_{s_1}+\lambda{}f_{s_1s_1})}{(1-\lambda{}f_{s_1})^3}\big)
\\
&+\sigma_2^2s_2^2\big(\frac{\lambda{}f_{s_2}f_{s_1s_1s_2}+f_{s_1s_2}(\lambda_{s_1}f_{s_1}+\lambda{}f_{s_1s_1})}{(1-\lambda{}f_{s_1})^2}+\frac{\lambda{}f_{s_2}f_{s_1s_2}(\lambda_{s_1}f_{s_1}+\lambda{}f_{s_1s_1})}{(1-\lambda{}f_{s_1})^3}\big)
\\
&+\frac{1}{2}\sigma_2^2s_2^2\big(\frac{f_{s_1s_2s_2}}{1-\lambda{}f_{s_1}}+\frac{f_{s_2s_2}(\lambda_{s_1}f_{s_1}+\lambda{}f_{s_1s_1})}{(1-\lambda{}f_{s_1})^2}\big).
\\
\big[\bar{\mu}_1s_1\big]_{s_2}=&\mu_1s_1\frac{\lambda_{s_2}f_{s_1}+\lambda{}f_{s_1s_2}}{(1-\lambda{}f_{s_1})^2}+\frac{\lambda_{s_2}f_{t}+\lambda{}f_{ts_2}}{1-\lambda{}f_{s_1}}+\frac{\lambda{}f_t(\lambda_{s_2}f_{s_1}+\lambda{}f_{s_1s_2})}{(1-\lambda{}f_{s_1})^2}
\\
+&\mu_2\big(\frac{\lambda{}f_{s_2}+s_2(\lambda_{s_2}f_{s_2}+\lambda{}f_{s_1s_2})}{1-\lambda{}f_{s_1}}+\frac{s_2\lambda{}f_{s_2}(\lambda_{s_2}f_{s_1}+\lambda{}f_{s_1s_2})}{(1-\lambda{}f_{s_1})^2}\big)
\\
+&\frac{1}{2}\sigma_1^2s_1^2\big(\frac{f_{s_1s_1s_2}}{(1-\lambda{}f_{s_1})^3}+3\frac{f_{s_1s_1}(\lambda_{s_2}f_{s_1}+\lambda{}f_{s_1s_2})}{(1-\lambda{}f_{s_1})^4}\big)
\\
+&\frac{1}{2}\sigma_2^2\Big(\frac{\lambda^2f_{s_1s_1}f_{s_2}^2+s_2\big(\lambda^2+f_{s_1s_1s_2}f_{s_2}^2+f_{s_1s_1}(2\lambda\lambda_{s_2}f_{s_2}^2+s\lambda^2f_{s_2}f_{s_2s_2})\big)}{(1-\lambda{}f_{s_1})^3}
\\
+&3\frac{s_2\lambda^2f_{s_1s_1}f_{s_2}^2(\lambda_{s_2}f_{s_1}+\lambda{}f_{s_1s_2})}{(1-\lambda{}f_{s_1})^4}\Big)
\\
+&\rho\sigma_1\sigma_2s_1\Big(\frac{\lambda{}f_{s_2}f_{s_1}^2+s_2\big(\lambda_{s_2}f_{s_2}f_{s_1}^2+\lambda(f_{s_2s_2}f_{s_1}^2+2f_{s_2}f_{s_1}f_{s_1s_2})\big)}{(1-\lambda{}f_{s_1})^3}
\\
+&3\frac{s_2\lambda{}f_{s_2}f_{s_1s_1}(\lambda_{s_2}f_{s_1}+\lambda{}f_{s_1s_2})}{(1-\lambda{}f_{s_1})^4}\Big)
\\
+&\rho\sigma_1\sigma_2s_1\big(\frac{f_{s_1s_2}+s_2f_{s_1s_2s_2}}{(1-\lambda{}f_{s_1})^2}+\frac{s_2f_{s_1s_2}(\lambda_{s_2}f_{s_1}+\lambda{}f_{s_1s_2}}{(1-\lambda{}f_{s_1})^3}\big)
\\
+&\sigma_2^2\Big(\frac{2s_2^2\lambda{}f_{s_2}f_{s_1s_2}+s_2^2\big(\lambda_{s_2}f_{s_2}f_{s_1s_2}+\lambda(f_{s_1s_2}f_{s_2s_2}+f_{s_2}f_{s_1s_2s_2})\big)}{(1-\lambda{}f_{s_1})^2}
\\
+&\frac{s_2^2\lambda{}f_{s_2}f_{s_1s_2}(\lambda_{s_2}f_{s_1}+\lambda{}f_{s_1s_2})}{(1-\lambda{}f_{s_1})^3}\Big)
\\
+&\frac{1}{2}\sigma_2^2\big(\frac{2s_2f_{s_2s_2}+s_2^2f_{s_2s_2s_2}}{1-\lambda{}f_{s_1}}+\frac{s_2^2f_{s_2s_2}(\lambda_{s_2}f_{s_1}+\lambda{}f_{s_1s_2})}{(1-\lambda{}f_{s_1})^2}\big).
\end{align*}
We conclude the partial derivatives of $\bar{\mu}_1(t,s_1,s_2)$ will be bonded when $|||1-\lambda{}f_{s_1}|||>\delta_0$ and:
\begin{gather}
\label{itocond4}
\|(\lambda+\lambda_{s_1}+\lambda_{s_2})(f_{t}+f_{s_1}+f_{s_2}+f_{s_1s_1}+f_{s_1s_2}+f_{s_2s_2}+f_{s_1s_1s_2}
\\
+f_{s_1s_2s_2})\|<\infty,\nonumber
\\
\label{itocond5}
\|\lambda(f_{ts_1}+f_{ts_2}+f_{s_1s_1s_1}+f_{s_2s_2s_2})+\lambda_{s_1}f_{s_1s_1s_1}+\lambda_{s_2}f_{s_2s_2s_2}\|<\infty,
\\
\label{itocond6}
\|s_1f_{s_1s_1}+s_1f_{s_1s_2}+s_2f_{s_1s_2}+s_2f_{s_2s_2}+s_1^2f_{s_1s_1s_2}+s_1^2f_{s_1s_2s_2}+s_1s_2f_{s_1s_1s_2}
\\
+s_1s_2f_{s_1s_2s_2}+s_2^2f_{s_1s_2s_2}+s_2^2f_{s_2s_2s_2}\|<\infty,\nonumber
\end{gather}
\par
The combination of requirements \eqref{itocond1}, \eqref{itocond2}, \eqref{itocond3}, \eqref{itocond4}, \eqref{itocond5}, \eqref{itocond6} will guarantee $s_1\bar{\mu}_{1}(t,s_1,s_2)$, $s_1\bar{\sigma}_{11}(t,s_1,s_2)$ and $s_1\bar{\sigma}_{12}(t,s_1,s_2)$ are uniformly Lipschitz continuous in $\big(\mathbb{R}^+\big)^2$. By It\^o's Existence and Uniqueness Theorem It\^o \cite{Ito} (1979), the SDE for $S_1$ will have a unique strong solution.
\end{proof}

\subsection{Finite Liquidity Existence and Uniqueness Theorem II}
\label{sec:FLEUT2}
To show the SDE have a unique strong solution, it is sufficient to show that the conditions $(1)-(6)$ in Appendix Section \ref{FLEUT1} are satisfied for the particular choice of $\lambda(t,s_1)$ and $f(t,s_1,s_2)=\Delta_1(t)$.
    \begin{itemize}
    \item Condition (1):
            \begin{align*}
            &\|\lambda(s_1Spd_{111}+s_1Spd_{112}+\Gamma_{12}+s_2\Gamma_{12}+s_2Spd_{111}+s_2Spd_{112}+s_2Spd_{122})\|
            \\
            &=\|\lambda\Big(\frac{N'(d_+)}{\sigma{}s_1\sqrt{\tau}}\big(\frac{2d_+}{\sigma{}s_1\sqrt{\tau}}+1\big)+s_1\frac{2d_+N'(d_+)}{\sigma^2\tau{}s_1s_2}+\frac{1}{\sigma\sqrt{\tau}}\frac{s_2N'(d_+)}{s_1^2}+\frac{s_2^2N'(d_+)}{\sigma\sqrt{\tau}s_1^2}
            \\
            &+\frac{s_2N'(d_+)}{\sigma{}s_1^2\sqrt{\tau}}\big(\frac{2d_+}{\sigma{}s_1\sqrt{\tau}}+1\big)+\frac{2d_+N'(d_+)}{\sigma^2\tau{}s_1}+\frac{2d_-s_2N'(d_+)}{\sigma^2\tau{}s_1^2}\Big)\|<\infty.
            \end{align*}
            \begin{proof}
            Notice there is a common term of the form $\frac{N'(d_+)}{s_1^n}$. These terms appears naturally in higher order Greeks. Consider any real number $n$, we have:
            \begin{align*}
            \frac{N'(d_+)}{s_1^n}&=\frac{1}{s_1^n\sqrt{2\pi}}\exp{\Big\{-\Big(\frac{\log(\frac{s_1}{s_2})+\frac{1}{2}\sigma^2\tau}{\sigma\sqrt{\tau}}\Big)^2\Big\}}
            \\
            &=\frac{1}{s_1^n\sqrt{2\pi}}e^{\Big\{-\frac{\log^2(s_1)+\log(s_1)\big(\frac{1}{2}\sigma^2\tau-\log(s_2)\big)+\big(\frac{1}{2}\sigma^2\tau-\log(s_2)\big)^2}{\sigma^2\tau}\Big\}}e^{-n\log(s_1)}
            \\
            &=\frac{1}{\sqrt{2\pi}}\exp{\Big\{-\frac{\log^2(s_1)+o\big(\log(s_1)\big)}{\sigma^2\tau}\Big\}},
            \end{align*}
            which approaches to $0$ as $s_1$ approaches to zero, and approaches to $0$ as well as $s_1$ approaches $\infty$. Since $n$ was arbitrary, then all of the functions in Condition $(1)$ are bounded in $s_1$. With a similar method involving the common term $\frac{N'(d_+)}{s_2^n}$, we can also show that all of the terms in Condition $(1)$ are bounded in $s_2$. We can ultimately conclude that the entire function of Condition $(1)$ is bounded in $(s_1,s_2)$.
            \end{proof}
    \item Condition (2):
            \begin{align*}
            \|\lambda_{s_1}\big(s_1\Gamma_{11}+s_2\Gamma_{11}+s_2\Gamma_{12}\big)\|<\infty.
            \end{align*}
            \begin{proof}
            Same proof as Condition (1).
            \end{proof}
    \item Condition (3):
            \begin{gather*}
            |||1-\lambda\Gamma_{11}|||>\delta_0, \text{ for some }\delta_0>0. 
            \end{gather*}
            \begin{proof}
            This condition already holds in the $s_1, s_2$ dimension. For $t$ we have $\lim_{t \to T}\bar{\lambda}(t,s_1)=0$ and $\lim_{t \to T} \Gamma_{11}(t)=\infty$ for at the money options. Since $\bar{\lambda}(t,s_1)$ approach to $0$ at a greater rate, then $\lim_{t\to T}\bar{\lambda}(t,s_1)\Gamma_{11}(t)=0.$ In fact, this ensures the $\bar{\lambda}(t,s_1)\Gamma_{11}(t)$ term stays small, which ultimately guarantees the existence of $\delta_0$. There is a more detailed explanation in Pirvu et al (2014) \cite{Pirvu}.
            \end{proof}
    \item Condition (4):
            \begin{align*}
            &\|(\lambda+\lambda_{s_1})(Chm_1+\Gamma_{11}+\Gamma_{12}+Spd_{111}+Spd_{112}+Spd_{122}+Acc_{1112}
            \\
            &+Acc_{1122})\|<\infty.\nonumber 
            \end{align*}
            \begin{proof}
            Same proof as Condition (1).
            \end{proof}
    \item Condition (5):
            \begin{gather*}
            \|\lambda(Col_1+Col_2+Acc_{1111}+Acc_{1222})+\lambda_{s_1}Acc_{1111}+\lambda_{s_2}Acc_{1222}\|<\infty.
            \end{gather*}
            \begin{proof}
            Same proof as Condition (1).
            \end{proof}
    \item Condition (6):
            \begin{align*}
            &\|s_1Spd_{111}+s_1Spd_{112}+s_2Spd_{112}+s_2Spd_{122}+s_1^2Acc_{1112}+s_1^2Acc_{1122}
            \\
            &+s_1s_2Acc_{1112}+s_1s_2Acc_{1122}+s_2^2Acc_{1122}+s_2^2Acc_{1222}\|<\infty. 
            \end{align*}
            \begin{proof}
            Same proof as Condition (1).
            \end{proof}
    \end{itemize}
Since we have shown Condition $(1)$ to $(5)$ in the Appendix Section \ref{FLEUT1} holds for our price impact trading strategy $\lambda\big(t,S_1(t)\big)df\big(t,S_1(t),S_2(t)\big)$. We can conclude the SDEs $S_1$ \eqref{FLMMSDEexchange} has a strong solution.

\subsection{Margrabe's Pricing Formula and Greeks}
\par
Margrabe (1978) \cite{Margrabe} derived the following closed form price for Exchange Option.
\begin{gather}
\label{Margrabe}
V(t,s_1,s_2)=\widetilde{\mathbb{E}}\big[e^{-r\tau}\big(S_1(T)-S_2(T)\big)^+|\mathscr{F}(t)\big]=s_1N(d_+)-s_2N(d_-),
\\
\text{where }d_\pm=\frac{\log(\frac{s_1}{s_2})\pm\frac{1}{2}\sigma^2\tau}{\sigma\sqrt{\tau}},\text{ and }\sigma^2=\sigma_1^2+\sigma_2^2-2\sigma_1\sigma_2\rho.\nonumber
\end{gather}
We can derive the Exchange Option Greeks by differentiating formula \eqref{Margrabe}. The first order Greeks are well known, they are available in papers such as Alos and Thorsten (2017) \cite{Alos1}.
\begin{gather*}
\Delta_1(t)=N(d_+)\qquad\Delta_2(t)=-N(d_-).
\\
\Theta(t)=\frac{\sigma{}s_1N'(d_+)}{2\sqrt{\tau}}=-\frac{\sigma{}s_2N'(d_-).}{2\sqrt{\tau}}.
\end{gather*}
For the second and higher order Greeks, we will provide derivations.
\begin{align*}
\label{sec:greek}
&\Gamma_{11}(t)=\frac{\partial{\Delta_{1}(t)}}{\partial{s_1}}=N'(d_+)\frac{\partial{d_+}}{\partial{s_1}}=\frac{N'(d_+)}{\sigma s_1\sqrt{\tau}}.
\\
&\Gamma_{22}(t)=\frac{\partial{\Delta_{2}(t)}}{\partial{s_2}}=-N'(d_-)\frac{\partial{d_-}}{\partial{s_2}}=\frac{N'(d_-)}{\sigma s_2\sqrt{\tau}}.
\\
&\Gamma_{12}(t)=\Gamma_{21}(t)=\frac{\partial{\Delta_{1}(t)}}{\partial{s_2}}=N'(d_+)\frac{\partial{d_+}}{\partial{s_2}}=-\frac{1}{\sigma\sqrt{\tau}}\frac{N'(d_+)}{s_2}=-\frac{1}{\sigma\sqrt{\tau}}\frac{N'(d_-+\sigma\sqrt{\tau})}{s_2}
\\
&\quad=-\frac{1}{\sigma\sqrt{\tau}}\frac{1}{s_2}\frac{1}{\sqrt{2\pi}}\exp\Big\{-\frac{1}{2}d_-^2-d_-\sigma\sqrt{\tau}-\frac{1}{2}\sigma^2\tau\Big\}
\\
&\quad=-\frac{1}{\sigma\sqrt{\tau}}\frac{1}{s_2}\frac{1}{\sqrt{2\pi}}\exp\Big\{-\frac{1}{2}d_-^2-\log{\big(\frac{s_1}{s_2}\big)}\Big\}=-\frac{N'(d_-)}{\sigma s_1\sqrt{\tau}}.
\\
&Charm_1(t)=\frac{\partial{\Delta_1(t)}}{\partial\tau}=N'(d_+)\frac{\partial{d_+}}{\partial\tau}=N'(d_+)\Big(-\frac{\log\big(\frac{s_1}{s_2}\big)}{2\sigma\tau^\frac{3}{2}}+\frac{\sigma}{4\sqrt{\tau}}\Big).
\\
&Charm_2(t)=\frac{\partial{\Delta_2(t)}}{\partial\tau}=-N'(d_-)\frac{\partial{d_-}}{\partial\tau}=N'(d_-)\Big(\frac{\log\big(\frac{s_1}{s_2}\big)}{2\sigma\tau^\frac{3}{2}}+\frac{\sigma}{4\sqrt{\tau}}\Big).
\\
&Speed_{111}(t)=\frac{\partial{\Gamma_{11}(t)}}{\partial{s_1}}=\frac{1}{\sigma\sqrt{\tau}}\frac{N''(d_+)\frac{\partial{d_+}}{\partial{s_1}}-N'(d_+)}{s_1^2}=\frac{1}{\sigma\sqrt{\tau}}\frac{-\frac{2d_+N'(d_+)}{\sigma{}s_1\sqrt{\tau}}-N'(d_+)}{s_1^2}
\\
&\quad=-\frac{\Gamma_{11}}{s_1}\big(\frac{2d_+}{\sigma{}s_1\sqrt{\tau}}+1\big).
\\
&Speed_{222}(t)=\frac{\partial{\Gamma_{22}(t)}}{\partial{s_2}}=\frac{1}{\sigma\sqrt{\tau}}\frac{N''(d_-)\frac{\partial{d_-}}{\partial{s_2}}-N'(d_-)}{s_2^2}=\frac{1}{\sigma\sqrt{\tau}}\frac{-\frac{2d_-N'(d_-)}{\sigma{}s_2\sqrt{\tau}}-N'(d_-)}{s_2^2}
\\
&\quad=-\frac{\Gamma_{22}}{s_2}\big(\frac{2d_-}{\sigma{}s_2\sqrt{\tau}}+1\big).
\\
&Speed_{112}(t)=\frac{\partial{\Gamma_{11}(t)}}{\partial{s_2}}=\frac{1}{\sigma\sqrt{\tau}}\frac{N''(d_+)\frac{\partial{d_+}}{\partial{s_2}}}{s_1}=-\frac{2d_+N'(d_+)}{\sigma^2\tau s_1s_2}=-\frac{2d_+\Gamma_{11}}{\sigma s_2}.
\\
&Speed_{221}(t)=\frac{\partial{\Gamma_{22}(t)}}{\partial{s_1}}=\frac{1}{\sigma\sqrt{\tau}}\frac{N''(d_-)\frac{\partial{d_-}}{\partial{s_1}}}{s_2}=-\frac{2d_-N'(d_-)}{\sigma^2\tau s_1s_2}=-\frac{2d_-\Gamma_{22}}{\sigma s_1}.
\\
&Colour_{11}(t)=\frac{\partial{\Gamma_{11}(t)}}{\partial\tau}=\frac{1}{\sigma{}s_1}\Big(-\frac{1}{2\tau^{\frac{3}{2}}}N'(d_+)-\frac{1}{\tau^{\frac{1}{2}}}N'(d_+)d_+\frac{\partial{d_+}}{\partial\tau}\Big)
\\
&\quad=\frac{N'(d_+)}{2\sigma\tau^\frac{3}{2}s_1}\Big\{-1+d_+\Big(\log(\frac{s_1}{s_2})\frac{1}{\sigma\sqrt{\tau}}-\frac{1}{2}\sigma\sqrt{\tau}\Big)\Big\}
\\
&\quad=-\frac{\Gamma_{11}}{2^3\sigma^2\tau^{2}}\Big(\sigma^4\tau^2+4\sigma^2\tau-4\log^2(\frac{s_1}{s_2})\Big),
\\
&Colour_{22}(t)=\frac{\partial{\Gamma_{22}(t)}}{\partial\tau}=\frac{1}{\sigma{}s_2}\Big(-\frac{1}{2\tau^{\frac{3}{2}}}N'(d_-)-\frac{1}{\tau^{\frac{1}{2}}}N'(d_-)d_+\frac{\partial{d_-}}{\partial\tau}\Big)
\\
&\quad=-\frac{\Gamma_{22}}{2^3\sigma^2\tau^{2}}\Big(\sigma^4\tau^2+4\sigma^2\tau-4\log^2(\frac{s_1}{s_2})\Big),
\\
&Colour_{12}(t)=Colour_{21}(t)=\frac{\partial{\Gamma_{12}(t)}}{\partial\tau}=\frac{1}{\sigma{}s_2}\Big(\frac{1}{2\tau^{\frac{3}{2}}}N'(d_+)+\frac{1}{\tau^{\frac{1}{2}}}N'(d_+)d_+\frac{\partial{d_+}}{\partial\tau}\Big)
\\
&\quad=\frac{-\Gamma_{12}}{2^3\sigma^2\tau^{2}}\Big(\sigma^4\tau^2+4\sigma^2\tau-4\log^2(\frac{s_1}{s_2})\Big)=\frac{-\Gamma_{21}}{2^3\sigma^2\tau^{2}}\Big(\sigma^4\tau^2+4\sigma^2\tau-4\log^2(\frac{s_1}{s_2})\Big).
\end{align*}
\begin{align*}
&Acceleration_{1111}(t)
\\
&\quad=\frac{\partial{Speed_{111}(t)}}{\partial{s_1}}=-\Big(\big(\frac{\partial}{\partial{s_1}}{\frac{\Gamma_{11}}{s_1}}\big)\big(\frac{2d_+}{\sigma{}s_1\sqrt{\tau}}+1\big)+\frac{\Gamma_{11}}{s_1}\frac{2}{\sigma\sqrt{\tau}}\big(\frac{\partial}{\partial{s_1}}\frac{d_+}{s_1}\big)\Big)
\\
&\quad=-\Big(\frac{Speed_{111}s_1-\Gamma_{11}}{s_1^2}\big(\frac{2d_+}{\sigma{}s_1\sqrt{\tau}}+1\big)+\frac{\Gamma_{11}}{s_1}\frac{2}{\sigma\sqrt{\tau}}\big(\frac{\frac{1}{\sigma\sqrt{\tau}}-d_+}{s_1^2}\big)\Big)
\\
&\quad=-\frac{2\Gamma_{11}}{\sigma\sqrt{\tau}s_1^3}\Big(d_+\big(\frac{2d_+}{\sigma{}\sqrt{\tau}s_1}+1\big)+\big(\frac{1}{\sigma\sqrt{\tau}}-d_+\big)\Big)
\\
&\quad=-\frac{2\Gamma_{11}}{\sigma^2s_1^3\tau}\big(\frac{2d_+^2}{s_1}+1\big),
\\
&Acceleration_{1112}(t)
\\
&\quad=\frac{\partial{Speed_{111}(t)}}{\partial{s_2}}=-\Big(\frac{1}{s_1}\big(\frac{\partial}{\partial{s_2}}\Gamma_{11}\big)\big(\frac{2d_+}{\sigma{}s_1\sqrt{\tau}}+1\big)+\frac{\Gamma_{11}}{s_1}\frac{2}{\sigma\sqrt{\tau}s_1}\big(\frac{\partial}{\partial{s_2}}d_+\big)\Big)
\\
&\quad=-\Big(\frac{Speed_{112}}{s_1}\big(\frac{2d_+}{\sigma{}\sqrt{\tau}s_1}+1\big)-\frac{\Gamma_{11}}{s_1}\frac{2}{\sigma\sqrt{\tau}s_1}\frac{1}{\sigma\sqrt{\tau}s_2}\Big)
\\
&\quad=\frac{2\Gamma_{11}}{\sigma s_1s_2\sqrt{\tau}}\big(\frac{2d_+^2}{\sigma \sqrt{\tau}}+d_++\frac{1}{\sigma s_1\sqrt{\tau}}\big),
\\
&Acceleration_{1122}(t)
\\
&\quad=\frac{\partial{Speed_{112}(t)}}{\partial{s_2}}=\frac{2}{\sigma\sqrt{\tau}s_1}\big(\frac{\partial}{\partial{s_2}}d_+\big)\Gamma_{12}+d_+\big(\frac{\partial}{\partial{s_2}}\Gamma_{12}\big)
\\
&\quad=\frac{2\Gamma_{12}}{\sigma^2s_1s_2\tau}\big(d_+d_--1\big),
\\
&Acceleration_{1222}(t)
\\
&\quad=\frac{\partial{Speed_{222}(t)}}{\partial{s_1}}=-\Big(\frac{1}{s_2}\big(\frac{\partial}{\partial{s_1}}\Gamma_{22}\big)\big(\frac{2d_-}{\sigma{}s_2\sqrt{\tau}}+1\big)+\frac{\Gamma_{22}}{s_2}\frac{2}{\sigma\sqrt{\tau}s_2}\big(\frac{\partial}{\partial{s_1}}d_-\big)\Big)
\\
&\quad=-\Big(\frac{Speed_{122}}{s_2}\big(\frac{2d_-}{\sigma{}\sqrt{\tau}s_2}+1\big)+\frac{\Gamma_{22}}{s_2}\frac{2}{\sigma\sqrt{\tau}s_2}\frac{1}{\sigma\sqrt{\tau}s_1}\Big)
\\
&\quad=\frac{2\Gamma_{22}}{\sigma s_1s_2\sqrt{\tau}}\big(\frac{2d_-^2}{\sigma s_2\sqrt{\tau}}+d_--\frac{1}{\sigma s_2\sqrt{\tau}}\big),
\end{align*}
\begin{align*}
&Acceleration_{2222}(t)
\\
&\quad=\frac{\partial{Speed_{222}(t)}}{\partial{s_2}}=-\Big(\big(\frac{\partial}{\partial{s_2}}\frac{\Gamma_{22}}{s_2}\big)\big(\frac{2d_-}{\sigma{}s_2\sqrt{\tau}}+1\big)+\frac{\Gamma_{22}}{s_2}\frac{2}{\sigma\sqrt{\tau}}\big(\frac{\partial}{\partial{s_2}}\frac{d_-}{s_2}\big)\Big)
\\
&\quad=-\frac{2\Gamma_{22}}{\sigma^2 s_2^3\tau}\big(\frac{2d_-^2}{s_2}+1\big).
\end{align*}
\section*{Acknowledgments}
The authors are grateful to the anonymous referee for a careful checking of the details and for helpful comments that improved this paper.

\bibliographystyle{siamplain}
\bibliography{main}
\end{document}